\newtheorem{theorem}{Theorem}[section]
\newtheorem{proposition}[theorem]{Proposition}
 \newtheorem{example}[theorem]{Example}
\newtheorem{claim}{Claim}
\newtheorem{remark}[theorem]{Remark}
\theoremstyle{definition}
\crefname{table}{Table}{Tables}
\crefname{remark}{Remark}{Remarks}
\tikzset{fontscale/.style = {font=\relsize{#1}}
}
\tikzset{
	between/.style args={#1 and #2}{
		at = ($(#1)!0.5!(#2)$)
	}
}
\title{The Cost Perspective of Liquid Democracy: \\ Feasibility and Control}
\author {
    Shiri Alouf-Heffetz\textsuperscript{\rm 1},
    Łukasz Janeczko\textsuperscript{\rm 2},
    Grzegorz Lisowski\textsuperscript{\rm 2}, 
    Georgios Papasotiropoulos\textsuperscript{\rm 3}  \vspace{1em}\\
    \textsuperscript{\rm 1}\small{Ben Gurion University of the Negev, Beersheba, Israel}\\
    \textsuperscript{\rm 2}\small{AGH Univesity, Kraków, Poland}\\
    \textsuperscript{\rm 3}\small{University of Warsaw, Warsaw, Poland}\\
    {\footnotesize
    shirih@post.bgu.ac.il}\\
    {\footnotesize
    \{ljaneczk,glisowski\}@agh.edu.pl,}\\
		{\small
    gpapasotiropoulos@gmail.com}
}
\date{}
\newcommand*{\balancecolsandclearpage}{%
  \close@column@grid
  \cleardoublepage
  \twocolumngrid
}
\begin{document}

\maketitle

\begin{abstract}
We examine an approval-based model of Liquid Democracy with a budget constraint on voting and delegating costs, aiming to centrally select casting voters ensuring complete representation of the electorate. From a computational complexity perspective, we focus on minimizing overall costs, maintaining short delegation paths, and preventing excessive concentration of voting power. Furthermore, we explore computational aspects of strategic control, specifically, whether external agents can change election components to influence the voting power of certain voters.
\end{abstract}

\section{Introduction}

\label{sec:intro}
In a perfect democracy, every citizen would ideally have the time, knowledge, and means to actively participate in every political decision impacting the society.
This ideal is hindered by various costs, such as the effort required to understand complex issues or the financial burden of frequent decision-making processes. 
Liquid Democracy addresses these challenges by allowing well-informed individuals to vote directly, while others (transitively) delegate their voting rights to trusted representatives. By requiring only a subset of the electorate to form opinions and vote, it reduces overall cognitive and operational costs. 
Originating in the 20$^{\text{th}}$ century ~\citep{tullock,miller1969program,green2015direct,blum2016liquid}, Liquid Democracy has seen widespread implementation in recent years~\citep{paulin2020overview} and has since become a prominent research area within Computational Social Choice ~\citep{grossi2024enabling}.
Despite its focus on reducing election-related costs, their explicit consideration remains unexplored. We address this by analyzing a natural Liquid Democracy framework operating within a budget constraint.

In a Liquid Democracy system costs can arise from various factors, such as the time or effort needed to understand the election topic or to identify participants aligned with their views for vote delegation. These costs broadly reflect preferences between delegation and direct voting. Alternatively, they could correspond to monetary incentives to encourage participation in decisions on complex issues, e.g., when voters prefer not to engage but recognize that some must for the collective good. Costs may also stem from the voters' expertise based on voting history or their profile and interests. Depending on the context, they can be specified by voters or inferred by the system.

Selecting trusted representatives under budget constraints extends beyond Liquid Democracy. The following examples demonstrate how our model and questions apply to other domains where being or trusting a representative incurs costs, aiming to minimize these expenses while meeting explicit trust requirements. 
     In a sensor network, a subset of sensors may be designated as cluster heads to aggregate and transmit data to a central hub \citep{mishra2019trust}. These could be chosen to optimize communication and data-driven decision-making, while minimizing costs related to head designation, such as energy consumption and security concerns.
     In autonomous vehicle fleets, leadership roles in decision-making, like route planning and traffic management, incur costs due to the advanced processing and communication required in control vehicles. Trust from other vehicles is essential for seamless, reliable operation and safety \citep{awan2020stabtrust}. 
     In decentralized blockchain networks, validators or miners validate transactions, ensuring network integrity and reducing computational costs for non-validator nodes \citep{gersbach2022staking}.
     Various implementations in cryptocurrency systems (see, e.g., Project Catalyst and Optimism)
    allow stakeholders to vote on protocol changes through representatives, rewarding the latter for their expertise and time spent understanding protocol details.

 In Liquid Democracy literature, cost is often assumed to exist for differentiating voters' options, though not in a quantifiable manner. For instance, in the works of \citet{escoffier2019convergence,escoffier2020iterative} and \citet{markakis2021approval}
 voters prefer options that are less costly, but this preference is not numerically captured. In contrast, \citet{armstrong2021limited} and \citet{bloembergen2019rational} explicitly consider costs 
 albeit in a decentralized scenario.
 A centralized approach is explored by \citet{birmpas2024reward}, which involves rewarding voters to become representatives. 
While this work, like ours, focuses on selecting a budget-feasible solution, their objective is to maximize the probability of identifying a ground truth, whereas ours emphasizes voter representation.

The model we examine aligns also with research on approval voting for committee elections, especially where voters choose representatives among themselves. Refer to the works of \citet{alon} and \citet{kahng2018ranking} for motivation and related literature. However, staying in line with the context of Liquid Democracy, we assume transitivity of trust, i.e., each voter trusts those whom their trustees trust. Our optimization criteria differ as well.

\subsection*{Our Contribution} In our work we explore decision-making scenarios modeled by a graph where vertices represent entities (or voters, in the terminology of Liquid Democracy) and edges denote trust relationships (potential delegations). 
Each voter is associated with a \emph{voting cost} 
and a \emph{delegating cost}. 
Importantly, our results also apply when delegating costs are zero -- a scenario relevant to several of the examples presented earlier. We work towards selecting voters to cast ballots under a constraint on the costs. Crucially, voters may be asked to vote despite their reluctance not only if they have lower voting costs than others but also if they are widely trusted as representatives. 
In line with theoretical works and common practices in Liquid Democracy, we assume transitivity of trust: If a voter votes on behalf of another, the latter is considered as satisfied as long as there is a directed path of approved representatives connecting them.
To incorporate this 
we require that all voters are represented, either directly or through paths of trust relationships. In the first part of our work, we design and analyze from the perspective of computational complexity mechanisms that:

\vspace{0.2cm}
\begin{adjustwidth}{0.45cm}{0.45cm}{ \quote{\emph{Minimize the total voting and delegating cost, while ensuring that every voter is either selected to vote or has a chain of trust leading to someone who is selected.}}}
\end{adjustwidth}
\vspace{0.2cm}

\noindent On top of this feasibility constraint, still under the objective of cost minimization, and towards suggesting intuitively better solutions, we additionally aim (1) \emph{to ensure each delegating voter has a reasonably short delegation path to a representative}, mitigating diminishing trust with path length, and (2) \emph{to prevent excessive concentration of voting power}, maintaining democratic legitimacy. Both desiderata have been identified as crucial and analyzed in prior works on Liquid Democracy, though in different contexts
\citep{brill2022liquid,golz2021fluid,kling2015voting}.

In the second part of our work we define and algorithmically study a family of control problems related to the aforementioned constraint on voters' voting power. Control of elections is a prevalent area of Computational Social Choice \citep{faliszewski2016control}, recently garnered attention in Liquid Democracy settings \citep{bentert2022won,alouf2024controlling}. Unlike the literature focusing on controlling outcomes, since voters' preferences over them are not explicitly present in our framework, we consider scenarios where:

\vspace{0.2cm}
\begin{adjustwidth}{0.45cm}{0.45cm}{ \quote{\emph{An external agent favoring a certain voter, aims to manipulate the instance to enhance their power.}}}
\end{adjustwidth}
\vspace{0.2cm}

\noindent This marks a conceptual contribution, which could pave the way for further strategic studies in Liquid Democracy and beyond, shifting attention from just election outcomes.

Our results show that if each voter specifies at most one individual as an acceptable representative, optimally selecting voters to cast ballots is computationally feasible under all of the examined constraints. However, allowing voters to approve more representatives makes most of the related problems hard. A similar pattern is observed in the examined control problems.
Despite the apparent similarity in some of the statements of our findings, the proofs themselves vary significantly. Furthermore, we designed the proofs attempting to be broadly applicable, allowing us to extend the results to related computational problems which are well-motivated within the examined setting. 

\medskip
\noindent All missing proofs are provided in the Appendix.

\section{Model and Definitions}
\label{sec:prelims}
The core component of the  setting that we study is a directed graph $G(N,E)$, which we call a \textit{delegation graph}, where the set $N$ of vertices represent \emph{voters} and the set $E$ of edges show to whom a voter might delegate. Specifically, an edge from a voter $i \in N$ to a voter $j\in N$ signifies that $i$ approves $j$ as their (immediate) representative.
We will refer to the tuple $(G(N,E),v,d)$ as an instance of a Liquid Democracy scenario under costs, or a \emph{cLD election} in short.
An important parameter of a cLD election is the \emph{maximum out-degree} of vertices of the corresponding delegation graph, which we denote by $\Delta$. We will call \emph{predecessors} of a vertex $i\in N$ all the vertices that have a directed path to $i$. Moreover, each voter $i\in N$ is associated with two costs: $d(i)$ for delegation and $v(i)$ for voting. We note that both values may be zero and that there is no inherent ordering between them.

A \textit{delegation function} is a mechanism that takes a cLD election $(G(N,E),v,d)$ as input, and returns a set $C\subseteq N$ representing voters who will cast a ballot, together with a set $D\subseteq E$ of exactly one outgoing edge for each vertex not in $C$ denoting the selected delegations.
We call the pair $(C,D)$ a \emph{solution} under a delegation function on $(G(N,E),v,d)$.
Under a delegation function, we refer to the selected subset of voters $C$ as the set of \textit{casting voters} and as \emph{delegating voters} to the voters from $N\setminus C$.
The first computational problem that we consider follows.

\begin{table}[h!]
	\centering
	\begin{tabular}{lp{11.6cm}}  
		\toprule
	 \multicolumn{2}{c}{\textsc{Delegate Reachability} } \\
		\midrule
{\small{\textbf{Input}}} & A cLD election $(G(N,E),v,d)$ and a budget parameter $\beta \in \mathbb{N}^*$.\\
{\small{\textbf{Question}}} & Does there exist of a delegation function of solution $(C,D)$ satisfying:\\ &
(i) the \emph{cost constraint}: $\sum_{i \in C}v(i)+ \sum_{i \notin C}d(i) \leq \beta$,\\ & 
(ii) the \emph{reachability constraint}: for every vertex $i\notin C$ there is a directed path to a vertex in $C$ using only edges in $D$.\\
		\bottomrule
	\end{tabular}
\end{table}
\noindent When $\beta$ is not part of the input of the considered problem, a solution, and in turn a delegation function, is said to be \textit{cost-minimizing} if it satisfies the cost constraint for the minimal value of $\beta$ for which 
reachability is also met. 
 
In principle, a delegation function can be viewed as a way to partition $N$ into casting and delegating voters, though it does not yet ensure that delegation cycles are avoided. However, a solution satisfying the reachability constraint not only involves the specification of casting voters but also designates a specific casting voter, or a \emph{representative}, for each delegating one, by following the paths to casting voters indicated by $D$.
Under a delegation function of the solution $(C,D)$, we denote by $R_j$ the set of delegating voters who have a path to a casting voter $j$ through edges in $D$, i.e., those who will ultimately be represented by $j$. Consequently, $j$ will vote with a voting weight, or \emph{voting power}, of $|R_j| + 1$, representing the voters in $R_j$ along with themselves.

        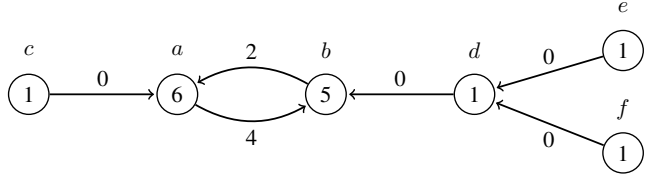
\begin{figure}[t]	
            \centering
            \scalebox{0.85}{
                \begin{tikzpicture}
                [->,shorten >=1pt,auto,node distance=2.3cm,
                semithick]
                \node[shape=circle,draw=black] (A)  {6};
                \node[shape=circle, above of = A, yshift=-4.6em] (A')  {$a$};

                \node[shape=circle,draw=black, right of = A] (B)  {5};
                \node[shape=circle, above of = B, yshift=-4.6em] (B')  {$b$};

                \node[shape=circle,draw=black, left of=A] (C)  {1};
                \node[shape=circle, above of = C, yshift=-4.6em] (C')  {$c$};

                \node[shape=circle,draw=black,  right of=B, shape=circle,draw=black] (D)  {1};
                \node[shape=circle, above of = D, yshift=-4.6em] (D')  {$d$};

                \node[shape=circle,draw=black,  right of=D, above of =D, yshift = -4.6em] (E)  {1};
                \node[shape=circle, above of = E, yshift=-4.6em] (E')  {$e$};

                \node[shape=circle,draw=black,  below of = E, yshift = 2em] (F)  {1};
                \node[shape=circle, above of = F, yshift=-4.6em] (F')  {$f$};
                
                \draw [thick, ->, bend right]
                (B) to node[above] {2}(A);
                \draw [thick,->, bend right] (A) to node[below] {4}  (B) ;

                \draw [thick,->] (C) to node[above] {0} (A) ;
                \draw [thick,->] (D) to node[above] {0} (B) ;
                \draw [thick,->] (E) to  node[above] {0}(D) ;
                \draw [thick,->] (F) to  node[below] {0}(D) ;

                \end{tikzpicture}}

        \caption{An example of a delegation graph on $6$ voters, namely $\{a,b,c,d,e,f\}$. Numbers in vertices indicate voting costs, while those on an edge $(u,v)$ show the delegating cost of voter $u$.}\label{fig:initialEx}    
        \end{figure}
        
\begin{example}\label{ex:initialEx}
Consider the delegation graph in \cref{fig:initialEx}. There are 6 voters, $a$ to $f$, for whom casting a ballot is more expensive than delegating. To ensure that a path exists from each voter to a casting one, any set including $a$ or $b$ could be selected to cast a ballot. However, if cost minimization is the (sole) objective, setting $C=\{a\}$ is necessary and it comes at a total cost of 8: 6 from $a$'s voting cost, 2 from $b$'s delegating cost, while the rest delegate for free. 
Under this solution, $a$ holds a voting power of 6.
\end{example}
\smallskip

In practice, and setting aside the costs for a moment, an electorate in a Liquid Democracy scenario under the discussed model can be partitioned into two groups: Voters who prefer casting a ballot and those who do not. Essentially, a delegation function satisfying the cost and reachability constraints is responsible for determining which voters who feel like they do not want to cast a ballot should be required to do so. This could be due to it being easier for them to become voters rather than others, or for the benefit of society, meaning they are widely approved as representatives by others (either directly or through a path of trust relationships). A natural question then arises regarding what happens to those voters who prefer to cast a ballot rather than delegate.\footnote{Voters who prefer to cast a ballot need not be asked to specify delegating costs at all, however, for uniformity, we describe the setting generally, as this observation does not affect our results.} Importantly, the objective of cost minimization in our setting suggests the expected outcome: voters who are willing to vote will be asked to do so.

\begin{remark}
A cost-minimizing solution includes in the selected set of casting voters every voter who prefers to cast a vote rather than delegate, i.e., satisfies $ \{i \in N: v(i) < d(i)\} \subseteq C$. This also applies to voters who do not express their consent for being represented by others.
\end{remark}

\section{Feasible Delegation Functions}
\label{sec:del_reachability}

We begin our analysis on the computational complexity of finding feasible delegation functions by showing that \textsc{Delegate Reachability} is tractable.

\begin{theorem}
\textsc{Delegate Reachability} is solvable in polynomial time.
\label{del_rep_positive}
\end{theorem}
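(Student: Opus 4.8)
The plan is to reduce \textsc{Delegate Reachability} to a minimum-weight problem on the delegation graph and argue that the optimum cost can be computed efficiently, after which the budget check is trivial. The key structural observation is that, because of transitivity of trust, a solution $(C,D)$ satisfying the reachability constraint is exactly a choice, for every delegating voter $i \notin C$, of one outgoing edge, such that following these edges from any vertex eventually lands in $C$ — equivalently, the functional subgraph induced by $D$ has no cycles and every sink is a casting voter. So I would first reformulate the task as: select $C \subseteq N$ and for each $i \notin C$ exactly one out-edge, minimizing $\sum_{i \in C} v(i) + \sum_{i \notin C} d(i)$, subject to acyclicity of the chosen edges.

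The main idea is that this minimization decomposes per strongly connected component, or more directly, can be solved by a single shortest-path-style computation. Concretely, I would build an auxiliary graph: add a sink vertex $t$; for each voter $i$ add an edge $i \to t$ of weight $v(i)$ (the option ``$i$ casts a ballot''), and for each original edge $(i,j) \in E$ keep it with weight $d(i)$ (the option ``$i$ delegates to $j$''). A feasible solution corresponds to choosing one outgoing edge per vertex so that every vertex reaches $t$; minimizing total weight is then the problem of finding, for each vertex, a min-cost way to reach $t$ where the cost paid at $i$ depends only on which out-edge is chosen at $i$. Since the weight contributed by $i$ depends solely on $i$'s own chosen edge and not on downstream choices, the optimal choice at $i$ is: pay $v(i)$ if $v(i)$ is no larger than $\min_{(i,j)\in E} d(i)$ restricted to successors that can themselves reach $t$; otherwise delegate. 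This is a backward reachability/fixpoint computation: let $S$ be the set of vertices that can be made to reach a casting voter; $S$ is nonempty and in fact all of $N$, since any vertex $i$ with $v(i) < \infty$ can simply cast. Hence every vertex can reach $t$, and the optimum is obtained greedily: each $i$ pays $\min\{v(i),\, d(i)\cdot[\exists (i,j)\in E]\}$ — with the caveat that delegating requires the chosen successor to itself be ``served,'' which is automatic here — so one must be slightly careful to rule out a delegating voter whose only successors all delegate in a cycle back to it; this is handled by observing that at least one vertex in every would-be cycle can be switched to casting without increasing cost beyond the greedy bound, or by a topological argument on the chosen functional graph.

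Putting this together, I would (i) argue feasibility is always possible (set $C = N$), so the question reduces to comparing the minimum cost against $\beta$; (ii) show the minimum cost equals $\sum_{i \in N} \min\{v(i),\, \delta(i)\}$ where $\delta(i) = d(i)$ if $i$ has an out-edge and $\delta(i) = \infty$ otherwise, EXCEPT that this naive sum might be realized only by a cyclic $D$; (iii) fix that by a careful exchange argument: in any optimal acyclic assignment, process vertices and show the greedy per-vertex choice can be realized acyclically because whenever a set of delegators would form a cycle, the cheapest member can switch to casting at cost $v(i) \le d(i)$, matching the bound. Then the algorithm — compute the per-vertex minimum, sum, compare to $\beta$, and if accepting, extract $(C,D)$ by a topological pass — runs in polynomial (indeed near-linear) time.

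The step I expect to be the main obstacle is item (iii): ensuring that the per-vertex greedy lower bound on cost is actually \emph{achievable} by an acyclic delegation assignment, i.e., ruling out the degenerate case where every cost-optimal local choice forces a delegation cycle with no casting voter on it. Handling this cleanly — probably via a minimal-counterexample / exchange argument showing any such cycle can be broken at its cheapest vertex without increasing cost — is where the real work lies; the rest is bookkeeping.
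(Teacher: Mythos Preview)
Your per-vertex lower bound $\sum_{i} \min\{v(i), \delta(i)\}$ is not the optimum in general, and the exchange argument you sketch in (iii) does not close the gap. Take two voters $a,b$ with edges $a\to b$ and $b\to a$, costs $v(a)=v(b)=10$ and $d(a)=d(b)=1$. Your formula gives $2$, but any feasible $(C,D)$ must contain a casting voter, so the true minimum is $11$. In your cycle-breaking step you write that ``the cheapest member can switch to casting at cost $v(i) \le d(i)$, matching the bound''; but a cycle appears in the greedy assignment precisely because \emph{every} vertex on it has $d(i) < v(i)$ (otherwise the greedy choice there would already have been to cast), so the inequality you invoke points the wrong way. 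Switching any such vertex to casting strictly increases the cost over the greedy sum, by $v(i)-d(i)>0$, and which vertex you switch matters.

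The missing ingredient is that this unavoidable surplus is localised to the \emph{sink} strongly connected components of $G$. Every feasible solution must place at least one casting voter in each sink SCC, while every non-sink SCC can route all of its members out along an in-arborescence to an exiting edge at no extra cost. Hence the correct optimum is $\sum_i \min\{v(i),\delta(i)\}$ plus, for every sink SCC $S$ in which no voter already satisfies $v(i)<d(i)$, the term $\min_{i\in S}\bigl(v(i)-d(i)\bigr)$. Computing this needs one SCC decomposition and a linear scan---which is exactly the paper's argument; a purely local per-vertex pass cannot see the sink structure and will underestimate the optimum. (Alternatively, your augmented-graph reduction is sound and the problem it states is minimum spanning in-arborescence rooted at $t$, solvable by Edmonds' algorithm; but that is not what your greedy-plus-exchange sketch does.)
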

\begin{proof}[Proof Sketch]
We present a greedy method for computing a cost-minimizing solution. We start by constructing the decomposition $\mathcal{D}(G)$ of the delegation graph $G$ into strongly connected components. 
Recall that a subgraph of a directed graph is called a strongly connected component (SCC) if, for every pair of vertices $u$ and $v$ within it, $u$ is reachable from $v$ and vice versa. The \emph{decomposition} of a graph $G$ into SCCs is a directed acyclic graph having one vertex for every maximal strongly connected component of $G$
and a directed edge from the vertex for component $S$ to the vertex for component $S'$ if $G$ contains an edge from any vertex in $S$ to any vertex in $S'$ (refer to the textbook by \citeauthor{dasgupta2006algorithms} (\citeyear{dasgupta2006algorithms}) for a more extensive discussion on SCC decompositions). We notice that in each feasible solution of \textsc{Delegate Reachability}, it is necessary to select at least one voter as a casting voter from each set of voters that belong to the component corresponding to a sink of $\mathcal{D}(G)$; note that $\mathcal{D}(G)$ has a sink as it is acyclic. 
First, identify all voters whose voting cost is strictly less than their delegation cost. These voters are designated as casting voters.
Then, for each sink of $\mathcal{D}(G)$ that has not yet been assigned at least one casting voter, it is sufficient to nominate the voter who minimizes the sum of their voting cost plus the delegation costs 
of the other voters in the sink.
\end{proof}

In many cases, it may be impossible to find a feasible delegation function that satisfies both the cost and the reachability constraints. Instead of deeming such scenarios indomitable immediately, we suggest first exploring a natural way or circumventing this issue, already addressed in other Liquid Democracy frameworks as well
\citep{colley2023measuring,campbell2022liquid,markakis2021approval}. One way to expand the solution space is to allow for a given number, $\alpha$, of abstainers, requiring feasibility only after excluding them. To expand further, as discussed in \cref{sec:intro}, ideally, in the examined model voters uncomfortable with casting a ballot should be assigned to a casting voter through a path of trust relationships. However, if this proves impossible, and with a slight abuse of notation, a solution would be to permit a small number of voters to remain unrepresented. Unfortunately, with this additional degree of flexibility, the corresponding variant of \textsc{Delegate Reachability} becomes computationally hard.

\begin{theorem}
    Allowing for $\alpha$ abstainers, \textsc{Delegate Reachability} is NP-complete.
    \label{del_rep_neg}
\end{theorem}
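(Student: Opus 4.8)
The plan is to show NP-completeness of the abstainer-relaxed version of \textsc{Delegate Reachability} by reduction from a well-known covering-type problem. Membership in NP is immediate: given a candidate solution $(C,D)$ together with the designated set $A$ of at most $\alpha$ abstainers, one checks in polynomial time that $|A|\le\alpha$, that $D$ picks exactly one outgoing edge for each vertex of $N\setminus C$, that the cost constraint $\sum_{i\in C}v(i)+\sum_{i\notin C}d(i)\le\beta$ holds, and that every vertex in $N\setminus(C\cup A)$ has a $D$-path to $C$ (a simple reachability computation on the subgraph $(N,D)$). So the work is in hardness.

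For hardness I would reduce from \textsc{Vertex Cover} (or equivalently \textsc{Dominating Set}, whichever gives the cleanest gadget). Given a graph $H=(V,E_H)$ and an integer $k$, build a cLD election as follows. Create one ``element vertex'' $u_e$ for each edge $e\in E_H$ with a prohibitively large voting cost and zero delegating cost, and one ``set vertex'' $w_x$ for each vertex $x\in V$ of $H$; add an edge $u_e\to w_x$ whenever $x$ is an endpoint of $e$ (so $\Delta$ is at most $2$ on the element side, but in general unbounded — this is fine since hardness is what we want). Give each $w_x$ voting cost $1$ and delegating cost large, and either attach each $w_x$ to a common private sink it must reach or let the $w_x$'s be sinks themselves; the intended behaviour is that a feasible solution with budget $\beta=k$ must select exactly $k$ of the set vertices as casting voters, and every element vertex $u_e$ must delegate — through its single chosen outgoing edge — to some selected $w_x$, which forces the selected set vertices to form a vertex cover of $H$. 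Set $\alpha$ to absorb exactly the slack one needs: here the natural choice is $\alpha=0$ if the pure covering already works, but if some vertices of $H$ are isolated or if the sink structure creates unavoidable unreachable vertices, set $\alpha$ equal to that fixed count. The equivalence ``$H$ has a vertex cover of size $k$ $\iff$ the constructed instance is a yes-instance'' then follows by reading the correspondence in both directions.

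The main obstacle is designing the gadget so that the $\alpha$ relaxation is actually \emph{used} in an essential way rather than being a red herring — i.e., making sure the hardness genuinely exploits the abstainer budget, since \cref{del_rep_positive} already shows the $\alpha=0$ case is polynomial, so a naive reduction that sets $\alpha=0$ cannot work. Concretely, I expect the right construction to have the abstainers play the role of ``optional'' coverage: for instance, encode an instance of a problem like \textsc{Set Cover with outliers} / \textsc{Partial Vertex Cover} (cover all but $\alpha$ edges with $k$ vertices), which is NP-hard, so that the $\alpha$ abstainers correspond exactly to the uncovered edges. So the plan becomes: reduce from \textsc{Partial Vertex Cover}, with element vertices $u_e$ that may either reach a selected set vertex or be declared one of the $\alpha$ abstainers, set vertices $w_x$ of voting cost $1$, budget $\beta=k$, and abstainer bound $\alpha$ copied verbatim. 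Checking that no cheaper-than-intended feasible solution exists — in particular that one cannot exploit delegating among set vertices or route element vertices cheaply in unintended ways — is the routine but delicate part, handled by the usual trick of making the ``wrong'' costs exceed $\beta$.
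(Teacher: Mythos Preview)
Your NP-membership argument and the bipartite incidence gadget (edge voters $u_e$ pointing to their two endpoint voters $w_x$) are exactly what the paper uses. Where you diverge is in the choice of source problem. You correctly noticed that $\alpha=0$ cannot work in light of \cref{del_rep_positive}, but then reached for \textsc{Partial Vertex Cover}; the paper stays with ordinary \textsc{Vertex Cover} and simply sets $\alpha=|V|-k$. The observation you missed is that the vertex voters $w_x$ are sinks (out-degree zero), so any $w_x$ not chosen as a casting voter is unrepresented and must be an abstainer. With all voting costs $1$, delegating costs $0$, and $\beta=k$, at most $k$ of the $w_x$'s cast, hence at least $|V|-k$ of them abstain; the bound $\alpha=|V|-k$ is therefore tight and leaves no slack for any uncovered edge voter, which forces the $k$ selected vertex voters to form a vertex cover.

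Your \textsc{Partial Vertex Cover} construction, as stated, has a gap in the abstainer accounting. You set $\alpha$ equal to the number of allowed uncovered edges, but the $|V|-k$ non-selected set vertices also need to abstain: if they are sinks there is no other option, and if instead you route them to a common zero-cost sink then every edge voter reaches that sink through \emph{any} $w_x$, so the covering constraint evaporates. Either way your abstainer budget is short by $|V|-k$. Once you add that term, the Partial-VC outlier parameter becomes redundant --- setting it to zero recovers precisely the paper's reduction.
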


Incorporating the possibility of abstainers into the model also aligns well with the line of work on control problems discussed in \cref{sec:intro} and which we analyze in \cref{sec:control}. Imagine a controller (not necessarily a malicious one this time) aiming to alter the election components to turn a negative instance of \textsc{Delegate Reachability} into a positive one, thereby ensuring that the election is not deemed useless. By selecting a set of abstainers to make certain that \textsc{Delegate Reachability} admits a valid solution, the controller (or election organizer) can essentially determine which voters to exclude to make the instance feasible. For completeness, we note that with a slight modification to the proof of \cref{control-hard2} we can also prove hardness for the problem of adding voters to ensure feasibility.

We conclude the section by highlighting that \cref{del_rep_neg} holds for a value of $\alpha$ that depends on the parameters of the input instance. However, if $\alpha$ is fixed, then a generalization of the result of \cref{del_rep_positive} is possible. Specifically, one can easily apply, in polynomial time, the algorithm presented after the deletion of any possible set of at most $\alpha$ voters.

\begin{proposition}
\label{const_alpha_del_reach_in_p}
Allowing for $\alpha\in \mathcal{O}(1)$ abstainers,
    \textsc{Delegate Reachability} is polynomially solvable.
\end{proposition}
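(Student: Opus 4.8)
The plan is to reduce to the algorithm of \cref{del_rep_positive} by exhaustively guessing the set of abstainers. Since $\alpha$ is a fixed constant, there are only $\sum_{k=0}^{\alpha}\binom{|N|}{k}\in\mathcal{O}(|N|^{\alpha})$ candidate abstainer sets, which is polynomial in the input size; for each of them the residual problem is an honest instance of the base problem.

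First I would enumerate every subset $A\subseteq N$ with $|A|\le\alpha$. For each such $A$, observe that requiring the solution to be feasible only after excluding $A$ is precisely the statement that the induced instance $\big(G[N\setminus A],\,v|_{N\setminus A},\,d|_{N\setminus A}\big)$ together with budget $\beta$ is a yes-instance of (plain) \textsc{Delegate Reachability}: a solution $(C,D)$ there has $C\subseteq N\setminus A$, exactly one outgoing edge in $D$ for each vertex of $N\setminus A$ not in $C$, cost $\sum_{i\in C}v(i)+\sum_{i\in N\setminus A\setminus C}d(i)\le\beta$, and a $D$-path from every non-casting voter in $N\setminus A$ to some casting voter --- exactly the cost and reachability constraints restricted to the non-abstaining electorate. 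Hence I can invoke \cref{del_rep_positive} in its cost-minimizing form on $G[N\setminus A]$ to compute, in polynomial time, the minimum achievable cost $\mathrm{opt}(A)$, and compare it with $\beta$. The overall algorithm accepts iff $\mathrm{opt}(A)\le\beta$ for at least one enumerated $A$.

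For correctness: if the $\alpha$-abstainer variant admits a solution, it designates some abstainer set $A^{\star}$ with $|A^{\star}|\le\alpha$ together with a feasible $(C,D)$ on $G[N\setminus A^{\star}]$ of cost at most $\beta$, so $\mathrm{opt}(A^{\star})\le\beta$ and $A^{\star}$ is among the sets enumerated; conversely, any enumerated $A$ with $\mathrm{opt}(A)\le\beta$, together with the cost-minimizing solution returned by \cref{del_rep_positive}, constitutes a valid solution of the $\alpha$-abstainer variant. The running time is $\mathcal{O}(|N|^{\alpha})$ invocations of a polynomial-time subroutine, hence polynomial for constant $\alpha$.

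There is no genuine obstacle here; the only point requiring (minor) care is verifying that deleting a fixed abstainer set yields exactly an instance of the base problem --- in particular that abstainers contribute neither a voting nor a delegating cost and impose no reachability requirement --- so that \cref{del_rep_positive} applies verbatim. If one prefers to keep $\beta$ fixed rather than resort to the cost-minimizing variant, the same enumeration works by checking the cost and reachability constraints directly on each induced subgraph.
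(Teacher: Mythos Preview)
Your proposal is correct and is essentially the same approach as the paper's: enumerate all $\mathcal{O}(|N|^{\alpha})$ candidate abstainer sets and, for each, invoke the polynomial-time algorithm of \cref{del_rep_positive} on the induced subgraph. The paper states this more tersely (``apply, in polynomial time, the algorithm presented after the deletion of any possible set of at most $\alpha$ voters''), while you spell out the correctness argument and the runtime bound, but the idea is identical.
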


In essence, \Cref{del_rep_positive} asserts that the existence of a \emph{sufficiently good} delegation function which meets basic cost and representation requirements can be efficiently decided. In what follows, we will focus on identifying necessary and sufficient conditions for achieving intuitively \emph{significantly better} solutions, referring mainly to sets of casting voters that satisfy additional requirements. 

\subsection{Refined Feasible Delegation Functions}
\label{sec:constrained}

We will focus on instances that admit a feasible solution under \textsc{Delegate Reachability}. Specifically, when sets of casting voters that meet both cost and reachability constraints exist, one delegation function may be seen as superior to another based on various metrics, with two natural requirements being (1) aiming for solutions that create short paths from delegating voters to casting voters, and (2) aiming for solutions that propose casting voters which will not acquire excessively large voting power.

Objective (1) follows the principle that the longer the proposed delegation path from a delegating voter $v$ to a casting voter, the less $v$ trusts their final representative. It is natural to expect (and always assumed in Liquid Democracy) that a voter trusts their immediate successor, and also the successor's successor. However, this plausibly comes with a reduction in trust \citep{boldi,armstrong2024optimizing,brill2022liquid}.
Hence, the shorter the delegation paths, the better. 

Objective (2) has been extensively discussed and analyzed in Liquid Democracy frameworks. 
We refer to \citet{golz2021fluid} and \citet{kahng2021liquid} for comprehensive analyses on the inherent danger that voters with high power can bring upon elections, and the threat this poses to the democratic nature of the system.

\begin{example}
    Take the instance from \cref{fig:initialEx}. Notice that if the budget allows, selecting $\{a,d\}$ as casting voters could be better than choosing $\{a\}$, as it splits the voting power between two casting voters. However, if the goal is to ensure that delegating voters have short distances to their representatives, setting the set of casting voters to $\{b\}$ is better than $\{a\}$, as then all voters would have a path of length at most 2 to their representative, budget permitting.
\end{example}

\noindent We now define the computational problems aimed at refining the solutions of \textsc{Delegate Reachability}.
\begin{table}[H]
	\centering
	\begin{tabular}{lp{11.6cm}}  
		\toprule
	 \multicolumn{2}{c}{\textsc{Bounded Max Length} / \textsc{Bounded Power} } \\
		\midrule
{\small{\textbf{Input}}} & A cLD election $(G(N,E),v,d),$ a budget $\beta \in \mathbb{N}^*$ and a parameter $\ell\in \mathbb{N}^*$.\\
{\small{\textbf{Question}}} & Does there exist a delegation function of solution $(C,D)$ which satisfies the cost and reachability constraints while ensuring that: \\&
$\blacktriangleright$ Each path from a delegating voter to its nearest voter in $C$ via edges from $D$ is of length at most $\ell$, for \textsc{Bounded Max Length} problem,\\& $\blacktriangleright$ The maximal voting power does not exceed $\ell$, for \textsc{Bounded Power} problem.\\
		\bottomrule
	\end{tabular}
\end{table}

Henceforth, we will pay particular attention to delegation graphs with $\Delta=1$. 
Conceptually, this scenario represents the first step away from direct democracy, with $\Delta=0$ where all voters cast their own ballots without representation. In contrast, $\Delta=1$ resembles forms of classic representative democracy, which is of course lacking the transitivity in ballots, while a priori partitioning voters into those who will represent and be represented.
Technically, as we will demonstrate, the examined problems are computationally hard for instances with $\Delta=2$. Therefore, exploring the restricted yet well-motivated case of $\Delta=1$ is a natural first step toward characterizing polynomially solvable instances.

A directed tree, i.e., a directed graph the underlying undirected graph of which has a tree structure, is  \emph{upwards-directed} if there is a vertex that can be specified as its \emph{root} or \emph{sink}, and all other vertices have a directed path leading to it. 
It is easy to see that a delegation graph with $\Delta=1$ consists of a disjoint union of (weakly) connected components, each containing at most one directed cycle. If a cycle exists, its vertices have no outgoing edges towards vertices outside the cycle. Thus, each component can be viewed as a graph with at most one cycle, whose removal would decompose the graph into a forest of upwards-directed trees. With this observation, we obtain the following result through a dynamic programming procedure.

\begin{theorem}
    For $\Delta \leq 1$, 
    \textsc{Bounded Max Length} is solvable in polynomial time.
\label{dp-maxlength}
\end{theorem}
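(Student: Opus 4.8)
The plan is to exploit the structural observation already stated in the excerpt: a delegation graph with $\Delta\le 1$ is a disjoint union of weakly connected components, each of which is either an upwards-directed tree (a forest of them, once we process the components separately) or a tree hanging off a single directed cycle. Since the components are independent, I would solve each one separately and combine: for each component compute, for every feasible way of "anchoring" it, the minimum cost of a reachability-feasible solution restricted to that component in which every delegating voter's path to its chosen casting voter has length at most $\ell$; then check whether the sum of the per-component minima is at most $\beta$. The core of the argument is a dynamic program over the tree structure.

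First I would handle the \emph{pure tree} case (upwards-directed tree rooted at $r$). Root the tree at $r$ and process vertices from the leaves upward. For each vertex $u$ I would maintain a table $T[u][h]$ for $h\in\{0,1,\dots,\ell\}\cup\{\textsf{cast}\}$, where $T[u][h]$ is the minimum total cost (voting costs of casting voters plus delegating costs of delegators) in the subtree rooted at $u$, under the promise that $u$ is a delegating voter whose path to its eventual representative (which lies at or above $u$) has length exactly $h$, or that $u$ itself is a casting voter ($h=\textsf{cast}$). The recurrence: if $u$ is casting, then $T[u][\textsf{cast}] = v(u) + \sum_{w\in\text{children}(u)} \min\bigl( T[w][\textsf{cast}],\, T[w][1] \bigr)$, since each child $w$ may either cast itself or delegate along the edge to $u$ (giving path length $1$ into $u$). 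If $u$ delegates with path length $h\ge 1$ to a representative above it, then its parent edge is used and $u$ contributes $d(u)$, while each child $w$ must either cast or delegate into $u$ with the resulting length $h+1$ still bounded by $\ell$: $T[u][h] = d(u) + \sum_{w\in\text{children}(u)} \min\bigl(T[w][\textsf{cast}],\, T[w][h+1]\bigr)$, valid only when $h+1\le\ell$ for the delegating option to be available for the children, and only when $h\le \ell$ itself. At the root $r$, feasibility forces $r$ to be casting (it has no outgoing edge), so the optimum for the tree is $T[r][\textsf{cast}]$. This table has size $O(n\ell)$ and each entry is computed in time proportional to the number of children, so the whole tree is handled in $O(n\ell)$ time.

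Next I would handle a component that is a \emph{cycle with hanging trees}. Let the cycle be $c_1\to c_2\to\cdots\to c_k\to c_1$; each $c_i$ is the root of an upwards-directed tree $T_i$ attached below it (vertices of $T_i$ delegate up toward $c_i$). The reachability constraint forces at least one cycle vertex to be a casting voter, because the cycle vertices have no edges leaving the cycle. I would guess which cycle vertices cast — actually it suffices to observe that in a cost-minimizing, length-bounded solution we may break the cycle at the casting vertices: each arc $c_i\to c_{i+1}$ is either used (then $c_i$ delegates to $c_{i+1}$) or not. So I would iterate over each choice of a nonempty set $S\subseteq\{c_1,\dots,c_k\}$ of casting cycle vertices — but to keep this polynomial I would instead, for each candidate "first" casting vertex $c_j$ and each maximal run of delegating cycle vertices $c_j, c_{j-1},\dots$ leading into it, compute costs via the same subtree DP with the boundary condition that $c_j$ casts, and each of the consecutive delegating cycle vertices has a prescribed path length into $c_j$ along the cycle (which must be $\le\ell$, bounding how long a run can be). Concretely: try all $O(k)$ choices of one casting vertex, then greedily/DP over the cycle arc usage exactly as in the tree case but with the cyclic adjacency and the length budget limiting runs to $\le\ell$; since a run of delegating cycle vertices longer than $\ell$ is infeasible, the DP over the cycle is an $O(k\ell)$ pass. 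Each cycle vertex $c_i$'s hanging tree $T_i$ is folded in through its root table $T[c_i][\cdot]$ computed as above. Summing over all components and comparing with $\beta$ gives the answer, all in polynomial time.

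The main obstacle, and the place requiring care, is the cycle case: ensuring that iterating over where the cycle is "broken" (i.e., which cycle vertices cast and how the remaining cycle vertices delegate around it) stays polynomial while still covering the optimum, and correctly propagating the length-$\ell$ bound both around the cycle segments and down into the hanging trees. The key simplifying facts are (i) a run of consecutive delegating cycle vertices feeding into one casting cycle vertex has length at most $\ell$, so there are only $O(\ell)$ relevant run lengths; and (ii) by independence of the non-cycle subtrees, once we fix, for each cycle vertex, whether it casts and (if not) its distance-to-representative along the cycle, the hanging tree contribution is read off from the precomputed table $T[c_i][\cdot]$. One should also double-check the degenerate cases: a component that is a single cycle with no hanging trees, a single vertex with a self-loop, and $\ell$ small enough that some component admits no feasible solution (in which case the whole instance is a NO-instance). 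With these checks the dynamic program is complete and runs in time polynomial in $n$ and $\ell$, hence polynomial in the input size.
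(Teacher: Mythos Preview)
Your proposal is correct and follows essentially the same approach as the paper: a bottom-up dynamic program on upwards-directed trees indexed by the distance from a vertex to its eventual casting representative, combined with a reduction of the one-cycle components to the tree case by fixing a casting vertex on the cycle. The paper's DP state $dp[v,i,k]$ always treats $v$ as casting and lets $k$ be the remaining distance budget for vertices routed to $v$ (un-casting children by subtracting $v(\cdot)$ and adding $d(\cdot)$ when they delegate), whereas your state $T[u][h]$ directly records whether $u$ casts or its distance $h$ to its representative; these are dual formulations of the same recurrence. One simplification worth noting: for the cycle case the paper observes that it suffices to try each cycle vertex once as the designated casting voter, delete its outgoing cycle edge, and run the \emph{tree} DP on the resulting rooted tree --- the tree DP already allows other cycle vertices to cast, so no separate cycle DP or run-length bookkeeping is needed.
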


In contrast to \textsc{Delegate Reachability}, moving to instances where $\Delta$ is larger than 1,  even for cases where the maximal out-degree of voters is 2, \textsc{Bounded Max Length} becomes computationally hard.

\begin{theorem}
    For $\Delta>1$, \textsc{Bounded Max Length} is NP-complete for every fixed $\ell\geq 2$.
    \label{length-hard}
\end{theorem}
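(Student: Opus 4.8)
The plan is to prove NP-completeness of \textsc{Bounded Max Length} for $\Delta>1$ and any fixed $\ell\geq 2$ by reduction from a suitable covering-type problem. Membership in NP is immediate: given a candidate solution $(C,D)$, one verifies in polynomial time the cost constraint, the reachability constraint, and that each delegation path has length at most $\ell$. For hardness, I would reduce from \textsc{Vertex Cover} (or, to directly accommodate the fixed parameter $\ell$, from a bounded-degree variant such as \textsc{Vertex Cover on 3-regular graphs}, or from \textsc{Dominating Set}). The guiding intuition is that in an instance with $\Delta=2$, allowing a voter to approve two potential representatives creates exactly the kind of "choice" that lets a small casting set cover many delegating voters, but the length bound $\ell$ forces the cover to be \emph{local}, which is what encodes the combinatorial constraint.

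Concretely, I would take an instance $(H=(V_H,E_H),k)$ of \textsc{Vertex Cover}. For each vertex $x\in V_H$ create a voter $u_x$, and for each edge $\{x,y\}\in E_H$ create an edge-gadget voter $w_{xy}$ with two outgoing edges, one to $u_x$ and one to $u_y$ (so $\Delta=2$). Set voting costs so that the $u_x$'s are the only reasonable casting voters: give each $u_x$ voting cost $1$ and a prohibitively large delegating cost (or no outgoing edge at all, making it a sink-candidate), and give each $w_{xy}$ a very large voting cost but zero delegating cost, so the only affordable option is to delegate it. Take $\beta=k$ (plus whatever slack the fixed construction needs). Then a budget-$k$ feasible solution must select at most $k$ of the $u_x$'s as casting voters, and every edge-gadget $w_{xy}$ must have its chosen delegation edge lead to a casting voter — i.e. at least one of $x,y$ is selected — which is precisely a vertex cover of size $\le k$. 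The length bound here is trivially satisfied ($\ell\geq 2$ suffices since every path has length $1$), so to force the bound to "bite" for larger fixed $\ell$ and to simultaneously rule out degenerate solutions, I would pad each delegation edge from $w_{xy}$ with a directed path of $\ell-1$ fresh zero-cost, high-voting-cost intermediate voters before reaching $u_x$ (and symmetrically before $u_y$); then a $w_{xy}$ delegating along such a padded path reaches a casting voter at distance exactly $\ell$ iff the endpoint $u_x$ (resp.\ $u_y$) is casting, and any non-casting endpoint would either violate reachability or blow the length bound. This makes the reduction work uniformly for every fixed $\ell\geq 2$.

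The main obstacle I anticipate is getting the gadget to \emph{exactly} simulate "choose one endpoint" while respecting both the length constraint and the reachability constraint, and in particular making sure there is no cheaper or shorter way to satisfy the $w_{xy}$'s that circumvents the vertex-cover semantics — e.g. an adversarial solution that routes an edge-gadget through the \emph{other} edge-gadgets or through the padding paths of neighboring edges, or one that exploits cycles. I would close these loopholes by orienting all padding edges strictly "towards" the $u_x$ layer, keeping the $u_x$ layer an independent set with no outgoing edges, and giving every auxiliary vertex (padding nodes and edge gadgets) such a large voting cost that any solution casting more than $k$ of them exceeds $\beta$; then the only budget-feasible casting voters outside the selected $\le k$ among $\{u_x\}$ are none. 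One must also double-check the boundary case $\ell=2$ separately, and verify that the forward direction (vertex cover $\Rightarrow$ feasible solution) genuinely yields paths of length exactly $\le \ell$ and total cost exactly $\le\beta$; these are routine once the gadget layering is fixed. Finally, I would remark that the same construction, with delegating costs set to zero, establishes hardness even in the zero-delegation-cost regime emphasized earlier, keeping the proof "broadly applicable" in the spirit announced in the introduction.
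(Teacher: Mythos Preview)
Your reduction has a genuine gap: in your construction the vertex voters $u_x$ are sinks (or, equivalently, have ``prohibitively large'' delegating cost), so by the reachability constraint \emph{every} $u_x$ must be a casting voter in any feasible solution. The total voting cost is then at least $|V_H|$, and your budget $\beta=k$ is violated whenever $k<|V_H|$, i.e., in every nontrivial \textsc{Vertex Cover} instance. The issue is that you have not provided the non-selected $u_x$'s with anywhere to delegate; the length bound never even comes into play because the instance is already infeasible for cost reasons.

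The paper sidesteps this by reducing from \textsc{3-SAT} and pairing up the ``selectable'' voters: for each variable $x_i$ it creates two literal voters $v_{x_i},v_{\neg x_i}$ in a $2$-cycle, so exactly one of them must cast and the other delegates to it at cost~$0$. The budget $\beta=|X|$ then forces precisely one casting voter per variable, and the clause gadgets are placed at distance exactly $\ell$ from the literal layer so that a clause voter stays within the length bound iff one of its literals is the casting member of its pair. Your approach can be repaired along similar lines---e.g., by giving each $u_x$ a private zero-cost sink $s_x$ at distance~$1$, so that non-selected $u_x$'s delegate there while a padded $w_{xy}$ routed through a non-selected $u_x$ ends up at distance $\ell+1$---but this is the missing idea, and without it the reduction does not go through.
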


\begin{proof}[Proof Sketch]
We show that the problem we consider is NP-hard by reducing from \textsc{3-SAT}. We first show the claim for $\ell=2$. Take a 3-CNF formula $\varphi$ with the set of variables $X= \{x_0, \dots, x_m \}$ and the set of clauses $\mathcal{C}=\{C_0, \dots, C_l\}$. We will also represent each clause $C_j$ as $\{L_j^1, L_j^2, L_j^3\}$, where each $L_j^i$ corresponds to a different literal in $C_j$, for $i\in \{1,2,3\}$. Let us construct what we call an \emph{encoding} of $\varphi$. Now, for each clause $C_j\in \mathcal{C}$ we add three voters, i.e., a \emph{clause voter} $v_{C_j}$, as well as two \emph{clause dummy} voters $d_{C_j}, d_{C_j}'$ corresponding to $C_j$. Additionally, for each variable $x_i\in X$, we add two \emph{literal voters} corresponding to $x_i$ and $\neg x_i$, namely voters $v_{x_i}$ and $v_{\neg x_i}$. For every clause $C_j$, we construct edges, i.e., possible delegations, from the clause voter $v_{C_j}$ to the clause-dummy voters $d_{C_j}, d_{C_j}'$. Also there are edges, i.e., possible delegations, from $d_{C_j}$ to the literal voter corresponding to $L_j^1$ and from $d_{C_j}'$ to the literal voters that correspond to $L_j^2$ and to $L_j^3$. Finally, for every $x_i\in X,$ we add the following pair of edges: from $v_{x_i}$ to $v_{\neg x_i}$ and from $v_{\neg x_i}$ to $v_{x_i}$. To complete the construction, we set voting costs to $1$ and delegating costs to $0$ and let $\beta=|X|$ and $\ell=2$. Note that it holds $\Delta=2$ and that the encoding of $\varphi$ consists of $2|X| + 3|\mathcal{C}|$ voters and the shortest path from each clause voter to a literal voter is of length exactly 2.

Assume $\varphi$ is satisfiable. Then, for each pair of literal voters $v_{x_i}$ and $v_{\neg x_i}$, we take $v_{x_i}$ as a casting voter if $x_i$ is true in the considering satisfying assignment of $\varphi$, and $v_{\neg x_i}$ otherwise. This selection is budget feasible and ensures that each clause voter is at a distance of exactly two from a casting voter. Dummy and literal voters are also at a distance of at most two each. 
Conversely, consider the case that $\varphi$ is unsatisfiable. It holds that any feasible solution for the encoding of $\varphi$ requires choosing exactly one from each pair of literal voters, due to the budget constraint, and also that only literal voters can be casting voters. This leads to some clause voters being three steps away from a casting voter, violating the path length constraint.

We note that the construction can be modified to work for any larger value of $\ell,$ by increasing the distance between clause and literal voters, from $2$ to $\ell$.
\end{proof}

We next examine the problem of selecting representatives with the constraint being an upper bound on the maximal voting power of casting voters. While the obtained results are analogous to those for \textsc{Bounded Max Length} (being polynomially solvable for $\Delta\leq1$ and NP-complete otherwise), the specifics of the proofs significantly differ.

\begin{theorem}
\label{dp-power}
    For $\Delta \leq 1$, \textsc{Bounded Power} is solvable in polynomial time.
\end{theorem}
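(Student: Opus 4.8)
The structure of a $\Delta \le 1$ delegation graph is already described in the excerpt: it decomposes into weakly connected components, each of which is either an upwards-directed tree or a tree with a single cycle attached at its root (the cycle having no edges leaving it). My plan is to handle one component at a time and combine the results, since the budget and the power bound interact only through the global budget sum. For a single component I will set up a dynamic program over the (rooted) tree structure. For a pure upwards-directed tree rooted at $r$, process vertices bottom-up; at each vertex $u$ with subtree $T_u$ I will compute, for each choice of ``status'' of $u$ — namely (a) $u \in C$ with some residual power budget $p \le \ell$ already committed to it from below, or (b) $u \notin C$, delegating along its unique out-edge, carrying upward a ``load'' $t \le \ell-1$ equal to the number of vertices in $T_u$ that will ultimately be routed through $u$ and hence must be absorbed by whichever casting voter lies above $u$ — the minimum total cost incurred strictly inside $T_u$. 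The transition at $u$ merges the children's tables: each child is either itself casting (contributing its optimal cost with any internal status) or delegating into $u$ (contributing a load $t_c$, and these loads plus $1$ for $u$ itself, if $u$ delegates, must be passed up; if $u$ casts, the loads plus $1$ must be $\le \ell$). This is a bounded knapsack-style convolution over at most $\ell$-valued quantities, hence polynomial. At the root $r$ we force $r$ (or, more precisely, allow $r \in C$ or require some casting voter to exist along every path — but in a tree, reachability forces at least one casting voter, and the load passed ``above'' $r$ must be absorbed, so we finalize by requiring $r$'s upward load to be $0$, i.e. $r \in C$ is not forced but the component must contain a casting ancestor for every vertex, which for a tree means we simply disallow a nonzero load escaping at $r$).

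\textbf{Handling the cyclic components.} If a component has a cycle $v_1 \to v_2 \to \cdots \to v_k \to v_1$, then every vertex hangs (via an upwards-directed tree) off some $v_i$, and the only outgoing edges from cycle vertices stay on the cycle. I will branch on the nonempty set $S \subseteq \{v_1,\dots,v_k\}$ of cycle vertices that are casting — but since branching over all subsets is exponential, instead I guess only the ``first'' casting vertex encountered and note that once we fix which cycle vertices cast, each cycle arc between consecutive casting vertices behaves like a path that must route its accumulated load forward to the next casting vertex. More carefully: it suffices to iterate over each candidate $v_i$ that will be casting and, treating the cycle as broken just before $v_i$, run a path-DP around the cycle that decides for each subsequent $v_j$ whether it casts (resetting the accumulated load) or delegates (accumulating), subject to no accumulated load exceeding the relevant bound and every casting vertex's total power $\le \ell$; the subtrees hanging off each $v_j$ are pre-summarized by the tree-DP above into a small table indexed by how much load they dump onto $v_j$. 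Running this for each of the $k$ choices of starting casting vertex and taking the minimum gives the component optimum as a function of budget spent; there is a subtlety that the accumulated load wrapping past the start must be consistent, which the ``break just before $v_i$'' choice resolves because $v_i$ casting means nothing wraps past it.

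\textbf{Combining components and reading off the answer.} Each component's DP yields, as a function of the cost spent in that component, the minimum achievable cost subject to the per-casting-voter power bound $\ell$ (cost itself being the natural quantity to minimize, so really each component just returns a single number: its minimum feasible cost under the $\ell$-constraint, or $\infty$ if infeasible, e.g. if $\ell$ is so small that even an isolated long path cannot be split — but splitting is always possible by making every vertex cast, so infeasibility only arises from the budget, handled globally). Summing the per-component minimum costs and comparing to $\beta$ decides the instance. Everything runs in time polynomial in $|N|$ and $\ell$, and since $\ell \le |N|$ without loss of generality (a power bound of $|N|$ or more is vacuous), this is polynomial in the input size.

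\textbf{Main obstacle.} The genuinely delicate part is the cyclic component: unlike \textsc{Bounded Max Length}, where the cycle can be reasoned about via distances, here the voting power of a cycle vertex aggregates the loads of all delegators routed to it, including entire hanging subtrees, and the ``wrap-around'' of accumulated load around the cycle must be pinned down to avoid double counting or an inconsistent circular dependency. I expect to spend the most care arguing that guessing a single casting cycle vertex $v_i$ and linearizing the cycle from there (so that no load wraps past $v_i$, and every other cycle vertex's decision is a clean forward path-DP) is both correct and sufficient, together with verifying that the tree-DP's summary table for each hanging subtree correctly feeds into the cycle's path-DP. The tree part and the component-combination part are routine bounded-knapsack-on-trees arguments.
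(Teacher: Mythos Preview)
Your approach is correct and essentially the same as the paper's. Both proofs run a bottom-up dynamic program on upwards-directed trees indexed by how much voting power (``load'') is committed to or passed through each vertex, merge children via an $\ell$-bounded knapsack-style convolution, and handle a cyclic component by iterating over each cycle vertex as a designated casting voter, deleting its outgoing edge, and running the tree DP on the resulting rooted tree; the only cosmetic difference is that the paper always treats $v$ as casting in its table $dp[v,i,k]$ and corrects by $d(v)-v(v)$ when $v$ turns out to delegate, whereas you explicitly carry both a ``casting'' and a ``delegating'' status. One small wording slip: in a pure upwards-directed tree the root $r$ has out-degree $0$ and therefore \emph{must} be in $C$, so your ``$r \in C$ is not forced'' is misstated, though your actual boundary condition (no load may escape at $r$) enforces exactly this and so the argument goes through.
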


\begin{proof}[Proof Sketch]
Note that in the case of $\Delta=0$ the problem is trivial. We show that the claim holds by providing a dynamic programming algorithm. For this we define $dp[v,i,k]$ as the minimum cost of a (partial) solution that gives a voting power upper bounded by $k$ to $v$ and upper bounded by $\ell$ to all other casting voters of the considered subgraph, among the following vertices in $G$: $v$, its first $i$ incoming neighbors and all of their predecessors. 
This is justified by the fact that vertices that will be considered next can increase the voting power of $v$ but not the voting power of voters that have been already clasified as casting. We will compute a value for $dp[v,i+1,k]$ based on $dp[v,i,\cdot]$ and $dp[v_{i+1},p(v_{i+1}),\cdot]$, where $v_{i+1}$ is the $(i+1)^{\text{th}}$ in-neighbor of $v$ and $p(v_{i+1})$ is the number of its in-neighbors. If $G$ is an upwards-directed tree, then the voter corresponding to the root should be nominated as a casting voter, among others, and a similar argument can be stated for arbitrary graphs satisfying $\Delta=1$. 
The values of the table can be computed by a bottom-up approach. Observe that the size of the table is polynomial in the input and it can be shown that the computation of its values can be also done efficiently.
\end{proof}

As in the \textsc{Bounded Max Length}, transitioning from a single accepted delegate per voter to allowing the approval of more potential representatives significantly increases the complexity of \textsc{Bounded Power}.

\begin{theorem}
    For $\Delta>1$, \textsc{Bounded Power} is NP-complete for every fixed $\ell\geq 4$.
\label{power-hard}
\end{theorem}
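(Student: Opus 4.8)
The plan is to show NP-hardness by a reduction from a suitable exact-covering or satisfiability problem, mirroring the strategy of \Cref{length-hard} but tailored to the voting-power constraint rather than path length. Membership in NP is immediate: given a solution $(C,D)$, one verifies the cost constraint, checks reachability by following edges in $D$, and computes each $|R_j|+1$ in polynomial time, so the focus is the hardness direction. As in the proof sketch of \Cref{length-hard}, I would reduce from \textsc{3-SAT}: given a 3-CNF formula $\varphi$ on variables $X=\{x_0,\dots,x_m\}$ and clauses $\mathcal{C}=\{C_0,\dots,C_l\}$, build a delegation gadget with a variable component forcing a choice of exactly one of $\{v_{x_i},v_{\neg x_i}\}$ to be casting (via the same two-cycle trick: edges $v_{x_i}\to v_{\neg x_i}$ and $v_{\neg x_i}\to v_{x_i}$, voting cost $1$, delegating cost $0$, and budget $\beta=|X|$ so that literal voters are the only affordable casting voters and exactly one per pair is chosen). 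The clause side would attach, for each clause $C_j$, a small tree of dummy voters whose only casting-reachable targets are the three literal voters satisfying $C_j$; a satisfying assignment then routes every clause gadget into a ``true'' literal voter.

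The key new ingredient, and the reason $\ell\geq 4$ appears, is engineering the power bound so that it is violated exactly when some clause is unsatisfied. The idea is to pad each literal voter $v_{x_i}$ with a fixed number of private pendant delegators (vertices whose unique out-edge points to $v_{x_i}$), chosen so that a literal voter selected as casting already carries voting power close to $\ell$ — say exactly $\ell-3$ or $\ell-2$ from its own gadget plus mandatory clause-dummy load — leaving it only enough slack to absorb the clause-dummies of the clauses it satisfies, but not more. If $\varphi$ is satisfiable, spread the clause gadgets over satisfied literals so that no literal voter's total power exceeds $\ell$. If $\varphi$ is unsatisfiable, then under any budget-feasible selection some clause $C_j$ has all three of its literals set false, so its dummy voters must delegate into a literal voter that is \emph{not} casting, forcing that dummy chain to be rerouted and ultimately overloading some casting literal voter beyond $\ell$ (or else breaking reachability/budget). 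Getting the arithmetic of the padding to make this equivalence exact — and confirming that with the padding the construction still has $\Delta=2$ — is where I would spend the most care.

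The main obstacle I anticipate is precisely this power-accounting step: unlike path length, voting power is a global additive quantity, so I must ensure that the ``rerouting'' forced by an unsatisfied clause cannot be diffused harmlessly across many casting voters. The cleanest way to prevent diffusion is to make the clause-dummy subtree for $C_j$ have a single articulation vertex that must delegate as a block (so its entire load of size, say, $2$ or $3$ lands on one literal voter), combined with tight padding so that each casting literal voter has slack for at most the clauses it genuinely satisfies. One should also double-check the lower bound $\ell\geq 4$: with three literals per clause and a block of size $\geq 1$ plus the literal voter's own weight, the smallest $\ell$ for which the gadget has the needed slack structure is $4$, and for larger $\ell$ one scales the private-pendant padding accordingly. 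Finally, I would verify that the total construction size is polynomial in $|X|+|\mathcal{C}|$ and that all costs and the budget are polynomially bounded, completing the reduction.
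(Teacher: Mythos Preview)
Your proposal mirrors the \textsc{Bounded Max Length} reduction from \textsc{3-SAT}, but the transfer to a power bound does not go through, and the obstacle you yourself flag is in fact a genuine gap rather than a detail. In the \textsc{3-SAT} gadget the load a casting literal voter $v_L$ must absorb depends on how many clauses contain $L$, which is not bounded a priori; hence no \emph{fixed} $\ell$ makes the forward direction work without restricting to a bounded-occurrence variant. More seriously, even with bounded occurrence the reverse direction fails: when a clause $C_j$ is unsatisfied, its gadget delegates through some false literal $v_{L_j^i}$ to the casting voter $v_{\neg L_j^i}$, but $v_{\neg L_j^i}$ need not be anywhere near capacity, since the clauses that contain $\neg L_j^i$ may be routed to their other satisfied literals. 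The extra weight from $C_j$ can therefore land on a voter with ample slack, and no power-bound violation is forced. Your ``tight padding'' idea would require every casting literal voter to sit exactly at $\ell$ under the intended routing, but that level is assignment- and routing-dependent, not a structural invariant of the formula, so the argument cannot be closed this way.

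The paper sidesteps both issues by reducing instead from \textsc{Vertex Cover} on $3$-regular graphs. Vertex voters and edge voters play the natural roles, and the crucial addition is a layer of \emph{dummy} voters of voting cost~$0$, one per vertex voter, each equipped with exactly two private pendants. Every dummy is automatically casting and carries a base power of~$3$; if its associated vertex voter is not in the selected cover, that vertex voter delegates to it, raising the dummy to exactly~$4$. An edge not covered by the chosen $k$ vertex voters must then route through an uncovered vertex voter to a dummy that is already at power~$4$, forcing power $5>\ell$. The $3$-regularity caps each casting vertex voter's load at $1+3=4$ in the YES case, and the zero-slack design of the dummies guarantees overflow in the NO case --- precisely the mechanism your approach lacks. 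For general $\ell\geq 4$ one attaches $\ell-2$ pendants to each dummy instead of two.
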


\begin{proof}[Proof Sketch]
For clarity, we first outline the proof for the case of $\ell=4$. We prove NP-hardness by reducing from \textsc{Vertex Cover} on 3-regular graphs. Considering such an instance $\mathcal{I}$ on $\nu$ vertices, we create the following encoding of $\mathcal{I}$ into an instance of \textsc{Bounded Power} problem: 
First, for every vertex of $\mathcal{I}$, we construct a \emph{vertex voter} corresponding to $v$. Moreover, for every edge $e$ of $\mathcal{I}$ we construct an \emph{edge voter} corresponding to $e$. Furthermore, if an edge $e$ of $\mathcal{I}$ is between vertices $i,j$, we construct edges, i.e., possible delegations, from the voter corresponding to $e$ to the voters who correspond to $i$ and to $j$. We finally add a set $D$ of $\nu$ \emph{dummy voters}, who will vote at a cost of $0$. For each vertex voter we add an edge to exactly one dummy voter in a way that each voter from $D$ has exactly one incoming edge from vertex voters. Finally, for each voter $v$ in $D$ we add two additional voters, who are willing to delegate to $v$. We set the rest voting costs to $1$ and all delegating costs to $0$ and also we set $\beta=k,$ where $k$ is the decision parameter in $\mathcal{I}$, and $\ell=4$. Note that $\Delta=2$.

If $\mathcal{I}$ is a positive instance, we construct a budget feasible delegation function having the vertex voters corresponding to vertices in the cover along with dummy voters as casting voters. Then, each edge voter is represented by a vertex voter selected to cast a ballot, while the non-selected vertex voters will be represented by the dummy voters. Voting power of casting voters is bounded by 4. 
Supposing that \(\mathcal{I}\) is a negative instance, a selection of $k$ vertex voters as casting is not sufficient to cover all edge voters. This leads to dummy voters having to represent some edge voters, which causes the voting power to exceed the bound of \(\ell = 4\). Hence, no feasible delegation function exists. In turn, the encoding of \(\mathcal{I}\) is a negative instance of \textsc{Bounded Power} as well. 
 
Increasing the number of voters approving each dummy voter from $2$ to $\ell-2$, and applying the same arguments, proves the theorem for any value of $\ell$.
\end{proof}

A further natural restriction on delegation functions involves averaging the lengths of paths to casting voters. While \textsc{Bounded Max Length} can be seen as having an egalitarian constraint, restricting the \emph{sum of path lengths} is a utilitarian one. This new restriction balances \textsc{Bounded Max Length} and \textsc{Bounded Power}, allowing a casting voter to represent either multiple voters nearby or a few farther away. The proof of \cref{power-hard} immediately establishes the hardness of \textsc{Bounded Sum Length}, which asks whether a delegation function of solution $(C,D)$ can satisfy cost and reachability constraints while keeping the sum of path lengths via $D$ to each casting voter within a given bound $\ell$. Moreover, \cref{dp-maxlength} can be easily adapted to address this restriction.

\begin{proposition}
        For $\Delta>1$, \textsc{Bounded Sum Length} is NP-complete for every fixed $\ell\geq 4$, but for $\Delta \leq 1$ it is solvable in polynomial time.
\end{proposition}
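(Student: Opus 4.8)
The plan is to establish the two halves separately, leaning heavily on the proofs already sketched for \cref{power-hard} and \cref{dp-maxlength}.

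For the hardness half, I would observe that the reduction from \textsc{Vertex Cover} on $3$-regular graphs used in the proof of \cref{power-hard} can be reused essentially verbatim, because in that construction every casting voter's voting power is governed by a \emph{set of delegation paths of length at most $2$} (edge voters are at distance $2$ from a selected vertex voter, dummy voters sit at distance $1$ from the vertex voters and distance $2$ from the extra $\ell-2$ approvers). First I would recompute, for the intended ``yes'' solution, not the maximum voting power but the \emph{sum of path lengths} arriving at each casting voter: each selected vertex voter receives paths from at most three edge voters (distance $2$) plus possibly itself at distance $0$, and each dummy voter receives paths from its $\ell-2$ private approvers (distance $1$) and from the non-selected vertex voters delegating to it (distance $1$). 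I would pick the bound on the path-length sum so that it is met exactly in the satisfiable case and necessarily exceeded in the unsatisfiable case, mirroring the counting argument that forces dummy voters to absorb uncovered edge voters. The one subtlety is that \textsc{Bounded Sum Length} constrains every casting voter's incoming path-length sum \emph{simultaneously} by the same $\ell$, so I must verify the construction can be rebalanced (e.g.\ by tuning the number of private approvers per dummy, exactly as the last line of the \cref{power-hard} sketch does for general $\ell$) so that the dummy voters and the vertex voters have comparable target sums; this bookkeeping is the main obstacle, but it is routine once the quantities are written out.

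For the tractable half with $\Delta\le 1$, I would adapt the dynamic programming procedure behind \cref{dp-maxlength}. Recall that a $\Delta=1$ delegation graph decomposes into components, each a forest of upwards-directed trees with at most one extra cycle edge. For \textsc{Bounded Max Length} the DP table indexed by a vertex and the remaining allowable depth suffices; here I would instead let the table entry $T[v,s]$ record the minimum cost of a partial solution on the subtree rooted at $v$ in which the sum of path lengths reaching the (not-yet-fixed) casting voter that will represent $v$ equals $s$, aggregated over all casting voters already fixed inside the subtree whose incoming sums are all at most $\ell$. Since every path length in a solution is at most $n$ and there are at most $n$ delegators, the relevant range of $s$ is $\{0,1,\dots,\ell\}$ with $\ell \le n^2$, so the table is polynomial. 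I would merge children of a node by a knapsack-style convolution over the partial sums (deciding for each child whether its representative is inside the child's subtree or is pulled up through $v$, shifting its accumulated sum by the appropriate edge length), and handle the lone cycle in a component by trying each cycle vertex as a casting voter and running the tree DP on what remains, exactly as in \cref{dp-maxlength}.

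Finally I would note that combining these gives the stated dichotomy, and remark that the correctness arguments are the same as for the cited theorems with ``maximum'' replaced by ``sum'' throughout; I expect no new conceptual difficulty beyond the careful choice of the numerical bound in the reduction.
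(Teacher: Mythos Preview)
Your overall plan is exactly the paper's: invoke the reduction of \cref{power-hard} for the hardness half and adapt the dynamic program of \cref{dp-maxlength} for $\Delta\le 1$. On the hardness side your distance bookkeeping is off (and inconsistent between your two mentions of it): in the construction of \cref{power-hard} edge voters are at distance~$1$ from vertex voters, and the private approvers are at distance~$1$ from their dummy. With the correct numbers no rebalancing is needed at all---the construction works verbatim. In a yes-instance each selected vertex voter receives at most three edge voters at distance~$1$ (sum $\le 3$) and each dummy receives its $\ell-2$ approvers plus at most one vertex voter, all at distance~$1$ (sum $\le \ell-1$); in a no-instance some dummy must additionally absorb an edge voter at distance~$2$, giving sum at least $(\ell-2)+1+2=\ell+1$. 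So the ``main obstacle'' you anticipate is not there.

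There is, however, a real gap in your dynamic program. The table $T[v,s]$ is not sufficient: when a child $w$ is pulled up through $v$, the contribution of $T_w$ to $v$'s path-length sum is not $s'+1$ but $s'+c'+1$, where $s'$ is the accumulated sum at $w$ and $c'$ is the \emph{number} of voters in $T_w$ delegating (transitively) to $w$---each such path lengthens by one, and $w$ itself contributes a new path of length~$1$. Two partial solutions on $T_w$ with equal $s'$ but different $c'$ yield different contributions upstream, so the knapsack convolution cannot be carried out over $s$ alone; ``shifting by the appropriate edge length'' does not capture this. The fix is routine---add the count as a further index, $T[v,s,c]$ with $c\le n$ and $s\le \ell\le n^2$, and convolve over pairs $(s',c')$---but it is a step your sketch as written would miss.
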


\section{Strategic Control}
\label{sec:control}

We now begin our exploration of algorithmic questions concerning strategic control by an external (malicious) agent, referred to as the \emph{controller}. In classic literature of election control (refer for instance to the work of \citeauthor{faliszewski2016control} (\citeyear{faliszewski2016control})), the controller is typically considered able to manipulate components of the election such as the set of voters or candidates to achieve a desired outcome, under a prespecified voting rule. However, in our model presented in \cref{sec:prelims}, which aligns with most studies in Liquid Democracy, we have abstracted away from the actual voting procedure and our focus has been on determining which voters will cast a ballot and consequently on the extent of their voting power; without considering the specifics of their votes.
Therefore, unlike traditional election control studies and questions where the controller attempts to influence the final outcome, in our work the controller cannot take advantage of voters' preferences or final ballots, as these details are beyond our scope. Instead, the controller is allowed to influence only the delegation process and the specification of casting voters and their power. 

We introduce a new class of control problems that integrates seamlessly with our model: The controller aims to manipulate the delegation process toward affecting the voting power of a designated voter. We capture this through the concept of a \emph{super-voter}, defined as the casting voter with the highest voting power, after applying a specified delegation function.\footnote{The term has been used in earlier works (e.g., by \citeauthor{kling2015voting} (\citeyear{kling2015voting})) to refer generally to voters with a large share of incoming delegations, which is (slightly) different from how we use it here.} We assume that the controller does not ex-ante know the precise delegation function that will be used, but only that it will be cost-minimizing while satisfying the reachability constraint, remaining consistent with our work's focus. Consequently, we say that the controller's goal is to ensure that their preferred voter becomes the (sole) super-voter, under every possible delegation function that meets these conditions.

\begin{table}[h!]
	\centering
	\begin{tabular}{lp{11.6cm}}  
		\toprule
	 \multicolumn{2}{c}{\textsc{Control by Adding/Deleting Voters (cav/cdv)}} \\
		\midrule
{\small{\textbf{Input}}} & A cLD election $(G(N,E),v,d),$ a budget $\beta \in \mathbb{N}^*$, a parameter $k \in \mathbb{N}^*$ and\\ & 
$\blacktriangleright$ a partition of $N$ into registered $(N_r)$ and unregistered $(N_u)$ voters, and a designated voter $x \in N_r$, for \textsc{cav}.\\ &
$\blacktriangleright$ a designated voter $x \in N$, for \textsc{cdv}.\\ 
{\small{\textbf{Question}}} & Does there exist a way to add up to $k$ voters from $N_u$, in \textsc{cav}, or to delete up to $k$ voters, in \textsc{cdv}, so that $x$ is the (sole) super-voter in the resulting election, under every cost-minimizing delegation function which satisfies the reachability constraint? \\
\bottomrule
	\end{tabular}
\end{table}

\begin{example}
Take the instance from \cref{fig:initialEx}. There, the cost-minimizing solution selects $a$ as a casting voter, but a malicious external agent preferring $b$ as the super-voter could achieve this by deleting $a$. Suppose that there is also a set $X$ of unregistered voters (not depicted in \cref{fig:initialEx}). Regardless of their costs or which voters they approve of,
the controller cannot make $b$ the unique super-voter under every feasible cost-minimizing solution by adding any subset of $X$.
\end{example}

\begin{theorem}
    \textsc{Control by Adding Voters} is NP-hard.    \label{control-hard1}\end{theorem}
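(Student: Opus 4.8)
The plan is to reduce from an NP-hard problem that naturally produces the combination of a budget constraint, a reachability requirement, and a ``uniqueness of the maximum'' condition. A natural candidate is again \textsc{Vertex Cover} on a restricted class of graphs (say 3-regular graphs, to stay consistent with the proof of \cref{power-hard}), or alternatively \textsc{3-SAT}. The central design challenge is that the controller does not pick the delegation function; the adversarial cost-minimizing solver does. So the encoding must force the \emph{set} of cost-minimizing solutions to be exactly parametrized by the combinatorial object we care about (a vertex cover / a truth assignment), and then the added voters must tip the balance so that the preferred voter $x$ dominates in \emph{every} one of them.

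First I would build a base gadget in which, before any additions, the designated voter $x$ is \emph{not} the unique super-voter: there is at least one other casting voter whose voting power ties or exceeds that of $x$ under some cost-minimizing solution. The unregistered set $N_u$ would consist of ``follower'' voters, each with voting cost high enough that a cost-minimizing solution never selects them to cast, and delegating cost $0$, whose only out-edge points (directly or through a short path) toward $x$. Adding such a follower to the election therefore increases $x$'s voting power by exactly one in every cost-minimizing solution, without changing the set of casting voters or the costs of any existing solution — this is the key invariant that keeps the adversary's options under control. The budget $k$ on additions is then set so that the controller can raise $x$'s power by at most $k$; the question becomes whether $k$ additions suffice to push $x$ strictly above every rival in every cost-minimizing solution simultaneously.

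Next I would couple the rivals' voting powers to the source-problem instance. Here is where the \textsc{Vertex Cover} structure enters: I would arrange a component analogous to the one in \cref{power-hard}, where ``vertex voters'' can serve as representatives of ``edge voters'', so that the precise voting power acquired by each potential rival casting voter depends on which vertices form the cover chosen by the cost-minimizing solver. Crucially, cost-minimality should leave the solver free to pick \emph{any} minimum-size vertex cover (all such covers have equal cost by uniform-cost gadgetry), so ``under every cost-minimizing delegation function'' really does quantify over all minimum vertex covers. The instance is a yes-instance of \textsc{cav} iff the controller can, with $k$ follower additions, make $x$ outweigh the heaviest rival \emph{in the worst case over all minimum covers} — and I would calibrate the numbers so this holds iff the original graph has a vertex cover of the target size (or, in the \textsc{3-SAT} variant, iff $\varphi$ is satisfiable, the satisfying assignments playing the role the minimum covers play above).

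\textbf{Main obstacle.} The delicate point — and the step I expect to be hardest — is the universal quantifier over cost-minimizing solutions. I must ensure (i) that \emph{adding} voters from $N_u$ never creates a cheaper solution that bypasses the intended structure (the followers must be genuinely useless as casting voters and must not shorten or redirect any reachability path in a cost-relevant way), and (ii) that the rival's worst-case power over all minimum-cost solutions is exactly the quantity controlled by the source instance, with no ``accidental'' cost-minimizing solution in which some unintended voter becomes a super-voter. Getting both the forward and backward directions to line up will require carefully chosen cost values (e.g.\ making all non-dummy voting costs equal so cost-minimality is purely a cardinality condition on casting voters) and a tie-breaking-free argument that the ``sole'' super-voter requirement is met. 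Verifying membership in NP is routine only after one observes that checking ``$x$ is the sole super-voter under every cost-minimizing solution'' may itself be co-NP-flavored; I would sidestep this by arguing the witness can be taken to be the set of added voters together with a short certificate of the optimal cost, and that the remaining verification reduces to a polynomial-time check given the structure of the constructed instance — though since the theorem only claims NP-\emph{hardness}, I need not resolve membership at all.
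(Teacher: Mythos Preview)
Your plan places the combinatorial burden on the wrong quantifier. By making the unregistered voters identical ``followers'' that each add exactly one to $x$'s power, you reduce the controller's decision to a triviality: it does not matter \emph{which} $k$ followers are added, only \emph{that} $k$ are added. The \textsc{cav} question then collapses to ``after adding $k$ followers, is $x$ the sole super-voter under every cost-minimizing solution?''---a single universal check with no existential choice left for the controller. You then assert that this check can be ``calibrated'' to hold iff the source graph has a vertex cover of the target size, but you never supply the mechanism, and the sketch as written cannot deliver it: if cost-minimizing solutions in the rival component are parametrized by \emph{minimum} vertex covers, the size of those covers is already determined by the instance, and there is no free parameter left to encode ``does a cover of size $\le t$ exist?''. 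More generally, once the controller's choice is vacuous, the natural thing one proves this way is co-NP-hardness (the adversary exhibits a bad cost-minimizing solution), not NP-hardness.

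The paper's reduction sidesteps this by putting the vertex cover on the \emph{controller's} side. The unregistered set is exactly the set of vertex voters, each with a single edge to $x$; edge voters are registered and point only to their two endpoint vertex voters. Before any additions, every edge voter has out-degree~$0$ in the registered subgraph (both endpoints are absent) and must cast; after adding $k$ vertex voters forming a cover, every edge voter gains a path to $x$ and delegates there, giving $x$ power $1+k+m$. A single rival $y$ with a fixed bloc of $k+m-1$ delegators has power exactly $k+m$ regardless of what is added, so $x$ strictly wins iff the $k$ added vertices form a cover. Here the adversarial solver has essentially no freedom---all cost-minimizing solutions assign the same powers to $x$ and $y$---so the universal quantifier is vacuous, and the hardness sits exactly where an NP-hardness proof needs it: in the controller's choice of which $k$ voters to add.
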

We now move to the case where the controller is able to remove some voters from participating in the election.
As in the case of adding voters, the relevant computational problem is NP-hard, although it requires  different reduction.

\begin{theorem}
    \textsc{Control by Deleting Voters} is NP-hard.\label{control-hard2} 
\end{theorem}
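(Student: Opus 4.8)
\textbf{Proof plan for \textsc{Control by Deleting Voters}.}
The plan is to reduce from an NP-hard problem whose structure naturally encodes a ``select a small set of vertices to achieve a covering-type goal'' requirement, mirroring what the deletion operation can do. A natural candidate is \textsc{Vertex Cover} (or \textsc{3-SAT}, whichever makes the delegation-graph gadget cleaner); I would try \textsc{Vertex Cover} first, since the proof of \cref{control-hard1} already establishes the feasibility of set-selection style reductions in this framework and the hardness proofs for \textsc{Bounded Power} use exactly this source. Given an instance $(H,k)$ of \textsc{Vertex Cover}, I would build a cLD election whose delegation graph contains a ``vertex voter'' for each vertex of $H$ and an ``edge gadget'' for each edge of $H$, set voting and delegating costs so that the cost-minimizing solutions are forced to behave in a controlled way, add a designated voter $x$ together with a gadget that lets $x$ accrue large voting power \emph{only if} the deletions knock out the right structure, and set the deletion budget to $k$. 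The intended correspondence is: deleting the $k$ vertex voters of a vertex cover makes $x$ the unique super-voter under every cost-minimizing reachability-respecting delegation function, and conversely.

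The key steps, in order, are: (1) fix the gadget so that, before any deletion, some other voter (a ``rival'') is guaranteed to be the super-voter, or at least $x$ is not uniquely so, under some cost-minimizing solution; (2) design the edge gadgets so that an un-deleted vertex voter ``absorbs'' the voting power of its incident edge gadgets, keeping that power away from $x$'s side, so that $x$ can dominate only when every edge gadget has lost at least one of its two vertex-voter neighbors --- i.e.\ exactly the vertex-cover condition; (3) verify the budget and cost bookkeeping so that a cost-minimizing delegation function is forced to route delegations the way the reduction intends (here I would use the \cref{del_rep_positive} structural insight: cost-minimizing solutions pick the cheapest casting voter within each sink SCC, and must include every voter with $v(i)<d(i)$), and in particular that the controller cannot ``cheat'' by deleting voters outside the vertex-voter set to gain an advantage more cheaply; (4) prove both directions of the equivalence, being careful about the universal quantifier ``under \emph{every} cost-minimizing delegation function'' --- for the forward direction I must check all tie-broken optimal solutions still make $x$ the sole super-voter, and for the backward direction I must exhibit, from a non-cover, a single cost-minimizing delegation function under which $x$ fails to be the unique super-voter.

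The main obstacle I anticipate is controlling the universal quantifier over cost-minimizing delegation functions simultaneously with the deletion budget: I need the gadget to be rigid enough that after the ``right'' $k$ deletions \emph{all} optimal solutions concentrate power on $x$ (no alternative optimum spreads it out), yet flexible enough that after any ``wrong'' set of $k$ deletions \emph{some} optimal solution denies $x$ the unique-super-voter status. Achieving both typically forces delicate choices of the costs (often using $0$ delegating costs and unit voting costs, as in the earlier proofs, so that the optimum is essentially combinatorially pinned down) and of the padding voters that feed $x$. A secondary subtlety is ensuring deletions cannot destroy reachability in a way that trivially changes which instances are ``yes'' --- I would handle this by making the designated voter $x$ and the core gadgetry robust to the permitted deletions, e.g.\ by giving every deletable voter an alternative out-edge so that removing it never orphans the rest of the graph. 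Once the gadget is rigid in this sense, the equivalence with \textsc{Vertex Cover} should follow by the same counting argument sketched for \cref{power-hard}, and NP membership (if one wants completeness rather than just hardness) would need the usual care, since verifying a solution requires checking a property over all cost-minimizing delegation functions; the theorem as stated only claims NP-hardness, so I would not need to resolve membership here.
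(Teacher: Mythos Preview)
Your plan correctly isolates the hard part --- handling the universal quantifier over cost-minimizing delegation functions --- but the concrete mechanism you describe does not work, and the choice of source problem is the reason. You want ``$x$ dominates only when every edge gadget has lost at least one of its two vertex-voter neighbours'', i.e.\ the deleted set is a vertex cover. With the natural gadget (edge voter approves its two endpoint vertex-voters, vertex voters feed a rival), losing \emph{one} neighbour is not enough: the edge voter can still delegate to the surviving endpoint, and under some cost-minimizing function that power reaches the rival. By your own description, an un-deleted vertex voter absorbs its incident edges' power; so after deleting a vertex cover $S$, an edge $\{u,v\}$ with $u\in S$, $v\notin S$ is still absorbed by $v$, and the rival's worst-case power is unchanged. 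What actually removes an edge voter from the rival's reach is losing \emph{both} neighbours, and the number of edges with both endpoints inside a $k$-set is maximised exactly when that set induces a clique. This is why the paper reduces from \textsc{Clique}: deleting the $k$ clique vertices isolates $\binom{k}{2}$ edge voters, dropping the rival's maximal attainable power by precisely that amount; any non-clique $k$-set isolates strictly fewer, leaving a cost-minimizing function under which the rival ties or beats $x$.

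A second issue your step~(3) flags but does not solve: with a single rival $y$, the controller just deletes $y$. The paper's fix is to use $k{+}1$ interchangeable rivals $y_1,\dots,y_{k+1}$, each approved by every vertex voter. With deletion budget $k$ the controller cannot remove all of them, and removing a proper subset is useless because the survivors inherit the same predecessor set; the paper then argues it is without loss of generality that all deletions hit $V$. Once you switch the source problem to \textsc{Clique}, add the $k{+}1$ rivals, and pad $x$ with $(n-k)+(m-\binom{k}{2})+1$ dedicated supporters, the two directions go through via the counting you sketch in step~(4): after the ``right'' deletions every optimal solution gives $x$ strictly more power than any $y_i$, and after any ``wrong'' deletions some optimal solution sends all surviving $V\cup E$ voters to a single $y_i$.
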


\begin{proof}[Proof Sketch]
We show the hardness of \textsc{Control by Deletion} by reducing from \textsc{Clique}. Given an instance of \textsc{Clique} $(G,k)$ we create an instance $\mathcal{I'}$ of the considered election control problem that contains the set of voters $E$ containing one voter of delegating cost equal to $0$ and voting cost equal to $1$, for each edge in $G$.
Also, we have a set of voters $V$ containing one voter of delegating cost equal to $0$ and voting cost equal to $1$, for each vertex in $G$. 
 Then, we take the set $D$ of $(n-k)+(m-\binom{k}{2})+1$ voters, each of delegating cost equal to $0$ and voting cost equal to $1$ and special voters $x$, being the designated by the controller voter and $y_1,y_2,\dots,y_k,y_{k+1}$, with voting cost equal to $0$ and delegating cost equal to $1$. 
Then, for each edge $e$ that is incident to vertices $u$ and $v$ in $G$, we add two possible delegations: one from the voter of $E$ that corresponds to $e$ towards the voter of $V$ that corresponds to $v$ and one from the voter of $E$ that corresponds to $e$ towards the voter of $V$ that corresponds to $u$. Then, there is an edge from each voter in $V$ to every voter in $Y= \{y_1,y_2,\dots,y_{k+1}\}$ and an edge from each voter in $D$ to voter $x$. Finally say that the upper bound on the number of voters that can be deleted equals $k$.

If there exists a $k$-clique $\mathcal{C}$ in $G$, then deleting voters corresponding to vertices in $\mathcal{C}$ ensures that the voters corresponding to the edges in $\mathcal{C}$ do not delegate their votes to any voter from $Y$ in any cost-minimizing delegation function. Hence, $x$ becomes a super-voter. Conversely, if a $k$-clique does not exist in $G$, then after deleting any $k$ voters, sufficiently many voters corresponding to edges of $G$ will delegate to a voter from $Y$ under some delegation function. Hence, $x$ cannot be guaranteed
to be the only super-voter.
\end{proof}

Other strategic control problems can be defined in a similar vein. For instance, one might focus on adding or removing edges (i.e., possible delegations), instead of voters, to ensure that the preferred voter becomes the sole super-voter. The proofs from \cref{control-hard1,control-hard2} extend to these problems with only minor adjustments. Furthermore, aligning with prior research on strategic control of classic elections    \citep{hemaspaandra2007anyone,faliszewski2011multimode}, one might also explore the \emph{destructive} counterparts of \textsc{cav} and \textsc{cdv}. Specifically, this involves performing alterations to the voters' set towards \emph{preventing} a specific voter from being the unique super-voter under every cost-minimizing delegation function that satisfies the reachability constraint. For these problems the proofs of \cref{control-hard1,control-hard2} apply directly.

\begin{proposition}
\label{rem:str} 
The variants of \textsc{Control by Adding/Deleting Voters} where the controller can add or delete edges of the delegation graph are NP-hard. Moreover, the destructive variants of \textsc{Control by Adding/Deleting Voters} are NP-hard as well.
\end{proposition}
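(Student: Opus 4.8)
The plan is to prove \Cref{rem:str} by observing that each of the four claimed results is obtained from one of the two control theorems (\Cref{control-hard1,control-hard2}) via a localized modification of the respective reduction, so the bulk of the work is already done; I would organize the proof as four short arguments, reusing the gadgets from the proof sketches of \Cref{control-hard1,control-hard2} essentially verbatim.

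First I would handle the edge-addition and edge-deletion variants. For \emph{deleting edges}, note that in the reduction from \textsc{Clique} used for \Cref{control-hard2}, deleting a vertex voter $u\in V$ has exactly the effect of cutting $u$ off from the electorate; but since the only role of $u$ in that construction is to pass delegations from its incident edge voters up to $Y$, the same effect is achieved by deleting the (at most three, in general $\deg(u)$) edges entering $u$ from the edge voters, together with $u$'s outgoing edges to $Y$ — or, more cleanly, one can attach to each vertex voter $u$ a single fresh private out-edge whose removal disconnects $u$ from all representatives, so that deleting that one edge simulates deleting $u$; then a budget of $k$ edge-deletions simulates $k$ vertex-deletions and the correctness argument is unchanged. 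For \emph{adding edges}, I would dualize the \textsc{cav} reduction for \Cref{control-hard1}: instead of having unregistered voters that can be inserted, place all voters in the graph from the start but with the ``activating'' delegation edges withheld, so that adding an edge plays the role of registering a voter; again the equivalence of instances is immediate from the structure of the original reduction.

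Next I would treat the destructive variants. The key observation here is that in both original reductions the target instance is a \emph{yes}-instance of the constructive problem precisely when a certain combinatorial object (a $k$-clique, resp.\ a suitable selection of unregistered voters) exists, and the controller's preferred voter $x$ is the unique super-voter in \emph{every} cost-minimizing reachability-respecting delegation function after the move iff that object exists. Since the destructive goal is to prevent a designated voter from being the unique super-voter under \emph{every} such delegation function, I would pick as the destructive target the voter that is the ``default'' unique super-voter of the untouched instance — in the \textsc{Clique} reduction this is the voter in $Y$ that would absorb the overflow delegations — and argue that the controller can dethrone this voter by the prescribed move iff the same combinatorial object exists. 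Concretely: in the deletion reduction, $x$ is \emph{not} forced to be the super-voter in every delegation function unless the clique-removal is performed, so making $x$ the destructive target (``keep $x$ from being the unique super-voter'') flips trivially; more carefully one sets up a companion voter $x'$ designed so that $x'$ is the unique super-voter exactly when no clique is removed, and asks to prevent $x'$ from being unique super-voter, which is possible iff a $k$-clique exists. The same templating works for the addition side.

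The main obstacle I expect is bookkeeping around the phrase ``under every cost-minimizing delegation function which satisfies the reachability constraint'': for the destructive variants one must be careful that dethroning the target is witnessed by the \emph{existence} of some delegation function in which the target is not the unique super-voter (the existential quantifier over delegation functions is now the ``bad'' direction), which is exactly the complement of the constructive ``for all delegation functions'' condition — so the reductions go through to NP-hardness rather than, say, coNP-hardness only because the underlying existence problems are symmetric in the right way. I would therefore state the adjusted gadgets precisely enough to check that (i) after the controller's move the set of cost-minimizing reachability-respecting delegation functions is nonempty and easy to describe, and (ii) the target voter's status as unique super-voter across that set is governed by the same NP-hard condition as before. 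Once these two points are verified for one representative case (say, destructive \textsc{cdv}), the remaining three cases follow by the same template with only notational changes, and I would say so explicitly rather than repeating the argument.
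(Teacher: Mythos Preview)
Your high-level plan matches the paper's own treatment, which gives no detailed argument and simply asserts that the reductions from \cref{control-hard1,control-hard2} ``extend with only minor adjustments'' to the edge variants and ``apply directly'' to the destructive variants; your decision to swap the designated voter and re-check the quantifier structure for the destructive cases is exactly the kind of bookkeeping the paper leaves implicit.

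There is, however, a concrete gap in your proposed dualization for \emph{Control by Adding Edges}. In the \textsc{cav} reduction from \textsc{Vertex Cover}, an unregistered vertex voter $v$ is absent from the graph, so an edge voter both of whose endpoints are unregistered has out-degree zero and is forced to cast. If instead you keep all vertex voters present and merely withhold the edge $(v,x)$, every edge voter still has its outgoing edges to the (now sink) vertex voters and therefore delegates under every cost-minimizing function. An edge voter with one activated and one non-activated endpoint may then delegate to the non-activated one; under the ``every cost-minimizing delegation function'' quantifier, $x$'s guaranteed power counts only edge voters with \emph{both} endpoints activated --- a \textsc{Clique}-type quantity, not a \textsc{Vertex Cover}-type one --- and meanwhile the non-activated vertex voters become competing casting voters whose powers must also be controlled. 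So the claim that ``the equivalence of instances is immediate from the structure of the original reduction'' is false as stated. The natural repair is either to base the edge-addition reduction on the \textsc{Clique} construction of \cref{control-hard2} (adding a single edge per vertex voter that routes it, and hence all its incident edge voters, toward $x$), or to insert a per-vertex gate voter so that the edges from edge voters into $v$ become usable only once $v$'s single activating edge is added.
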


Following our approach in \cref{sec:constrained}, we now turn to the examination of cLD elections where the maximal out-degree of vertices in the input delegation graph is upper bounded by $1$. As one would expect, polynomial solvability now holds, but interestingly, the procedures we propose for solving the control problems differ completely from those used for proving \cref{dp-maxlength,dp-power}.

\begin{theorem}
    \textsc{Control by Adding Voters} is solvable in polynomial time if $\Delta \leq 1$.
    \label{control_pos1}
\end{theorem}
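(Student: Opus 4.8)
The plan is to exploit the rigid structure of delegation graphs with $\Delta \leq 1$: as observed before Theorem~\ref{dp-maxlength}, each weakly connected component is a collection of upwards-directed trees whose roots either coincide with a single vertex of zero out-degree or feed into a unique directed cycle. Crucially, in such a component a cost-minimizing delegation function that satisfies reachability is essentially forced: in an acyclic (tree) component the unique sink must be a casting voter and is the only candidate to receive all the weight, while in a cyclic component exactly one vertex on the cycle is chosen to cast, namely the one minimizing voting cost plus the delegating costs of the remaining cycle vertices (with ties handled carefully, since the controller must win under \emph{every} cost-minimizing function). So the first step is to make precise a structural lemma: given a fixed $\Delta\leq 1$ delegation graph, I can compute in polynomial time, for each component, the set of possible super-voter identities and the exact voting power each would obtain, across all cost-minimizing reachability-respecting delegation functions. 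Then $x$ is guaranteed to be the sole super-voter iff $x$'s component always forces $x$ as its casting voter (or cycle-casting vertex) with weight equal to the whole component, and this forced weight strictly exceeds the maximum possible weight achievable in every other component.

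The second step handles the controller's move. Adding a voter $u$ from $N_u$ (together with $u$'s at most one outgoing edge, which may point into the registered graph or to another added voter) can only do two useful things: (a) enlarge the component containing $x$, thereby increasing $x$'s forced weight, provided $u$'s chain of trust leads into $x$'s component without creating an alternative sink or an alternative cheapest cycle-vertex that could steal the role of casting voter; or (b) it cannot usefully shrink or redirect other components, since added voters only attach to existing structure and (by $\Delta\leq 1$) cannot split a component. Because each added voter has out-degree at most one, the subgraph induced on $N_u$ together with the attachment points is itself a $\Delta\leq 1$ graph, so the set of added voters that end up (transitively) delegating into $x$'s component forms a union of upwards-directed trees hanging off vertices of that component. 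Hence the real question is a covering/selection question: choose at most $k$ vertices of $N_u$ so that (i) every chosen added voter has its trust-path ending inside $x$'s component, (ii) no chosen added voter lies on or creates a cycle, or becomes a new zero-out-degree sink, that would give some vertex $\neq x$ a cheapest-casting role, and (iii) the resulting enlarged $x$-component has forced weight strictly larger than the best attainable weight in all other components (which, note, can \emph{decrease} relative to $x$ only by $x$'s growth, since other components are untouched).

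The third step turns (i)--(iii) into an efficient algorithm. I would first discard from $N_u$ every voter whose (already determined, since $\Delta\leq1$) trust-path does not terminate in $x$'s component, or whose addition would introduce a forbidden alternative sink/cycle — these are local, polynomial-time checks. Among the survivors, every added voter contributes exactly $+1$ to $x$'s weight once its whole pendant subtree of other added voters is included, so I want to add a maximum-size ``downward-closed'' (toward the attachment point) set of survivors of size at most $k$; this is a greedy/matroid-style selection, trivially polynomial. Finally I check whether the achievable increment $\min(k, \#\text{survivors})$ pushes $x$'s forced weight strictly above the threshold $\max_{\text{other components}} (\text{max attainable weight})$ — a single numeric comparison. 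Correctness follows from the structural lemma plus the observation that adding non-survivors never helps and may only hurt (by diluting $x$ via a sibling branch or by spawning an alternative casting vertex).

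The main obstacle I anticipate is the ``under every cost-minimizing delegation function'' clause combined with ties: within a cyclic component there may be several vertices tying for the minimum of (voting cost $+$ sum of other cycle vertices' delegating costs), and then no single vertex is forced, so $x$ cannot be the guaranteed sole super-voter from that component no matter what is added — unless the addition breaks the tie, which with $\Delta\leq1$ additions attaching to the cycle is delicate to analyze. Likewise, ties in total weight between $x$'s component and another component must be ruled out, and the controller can only break such a tie by growing $x$, never by shrinking the competitor. Getting these tie-breaking cases exactly right, and proving that the greedy survivor-selection is optimal in their presence, is where the bulk of the careful argument will go; the rest is bookkeeping over the at most $|N|$ components, each handled by the $\Delta\leq1$ machinery already developed for Theorems~\ref{del_rep_positive} and~\ref{dp-maxlength}.
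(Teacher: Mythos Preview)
Your approach is essentially the same as the paper's: exploit the $\Delta\leq 1$ structure, verify that $x$ must be a casting voter under every cost-minimizing reachability-feasible delegation function (otherwise the instance is negative and no addition can help), identify which unregistered voters can be usefully attached, and add them greedily in layers. The paper's proof is organised the same way and even uses the layer-by-layer greedy you describe.

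There is, however, a genuine gap in your structural lemma. You implicitly assume that in a $\Delta\leq 1$ component there is exactly one casting voter (the unique sink, or the forced cycle vertex) who then absorbs the entire component as voting power. This overlooks the Remark in \cref{sec:prelims}: every voter $i$ with $v(i)<d(i)$ is a casting voter in \emph{every} cost-minimizing solution, and every voter with $v(i)=d(i)$ is one in \emph{some} such solution. Three consequences follow. First, $x$'s guaranteed voting power is not the size of its component but only $1+|D(x)|$, where $D(x)$ is the set of voters who delegate to $x$ under \emph{every} cost-minimizing function (obtained, as the paper does, by virtually deleting from $x$'s tree every vertex that might cast in some optimum). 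Second, competitors to $x$ can live inside $x$'s own component, not only in ``other components'' as your threshold formula assumes. Third, your survivor filter is too permissive: an unregistered voter $u$ with $v(u)\leq d(u)$ may itself become a casting voter in some optimum, and an unregistered voter whose trust-path enters $x$'s component through some $z\notin D(x)$ contributes to $z$'s power in at least one optimum, not to $x$'s. The paper's filter is accordingly sharper: keep only unregistered voters with $v(u)>d(u)$ whose (possibly transitive, through other added voters) path lands in $D(x)\cup\{x\}$. Once you tighten your criterion from ``terminates in $x$'s component'' to ``terminates in $D(x)\cup\{x\}$'' and replace ``other components'' by ``all other possible casting voters'', your downward-closed greedy selection and final threshold comparison go through exactly as you outline; the tie-breaking worries you flag for cycles are subsumed by the same $D(x)$ computation.
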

\begin{proof}[Proof Sketch]
For $\Delta = 0$ it suffices to see that no voter may delegate, so $x$ can be the unique super-voter only if $|N| = 1$.
For $\Delta = 1$, we must first verify that \( x \) is a casting voter in every cost-minimizing delegation function; if not, additional voters cannot alter this, meaning the instance has no feasible solution. This verification can be done in polynomial time. We first convert $x$ into the root of a tree by removing its outgoing edges, if any and only if the voting cost of $x$ is higher than its delegating cost. Otherwise, there could be a delegation function where $x$ delegates. We then aim to increase the voting power of $x$ by adding unregistered voters who will delegate to $x$ in any cost-minimizing delegation function. These are identified based on their costs as well as the existence of paths to $x$ or to registered voters who will definitely delegate to $x$. We use a greedy algorithm to add these voters in layers: starting with those directly connected to $x$ or to voters that $x$ will represent, and iteratively including more until $k$ is reached.
\end{proof}

A greedy strategy can also identify which voters can be deleted to ensure a preferred voter becomes the only super-voter under any cost-minimizing delegation function.

\begin{theorem}
    \textsc{Control by Deleting Voters} is solvable in polynomial time if $\Delta \leq 1$. \label{control:pos2}
\end{theorem}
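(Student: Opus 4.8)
The plan is to mirror the structure of the proof of \cref{control_pos1}, handling the cases $\Delta = 0$ and $\Delta = 1$ separately, but now the controller removes up to $k$ voters rather than adding them. For $\Delta = 0$ the situation is trivial: no voter delegates, so $x$ can be the unique super-voter only if deleting voters leaves a single-vertex graph containing $x$, which is checkable immediately. For $\Delta = 1$, I would first observe, exactly as in \cref{control_pos1}, that $x$ must end up being a casting voter under \emph{every} cost-minimizing delegation function in the post-deletion instance; since each component with $\Delta = 1$ is a forest of upwards-directed trees plus at most one cycle, and since the \cref{del_rep_positive} algorithm selects exactly one casting voter per sink component (respecting $\{i : v(i) < d(i)\} \subseteq C$), one can characterize in polynomial time when a given vertex is forced to cast. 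Concretely, after deleting a candidate set, $x$ is guaranteed to be the casting voter of its component iff $x$ becomes (or already is) the unique sink of its component and the tie-breaking cost comparisons force it; I would precompute, for $x$, whether its outgoing edge should be severed (only when $v(x) > d(x)$, otherwise a delegating solution for $x$ exists and control is hopeless) and then treat $x$ as the root.

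The core of the argument is a greedy deletion strategy. Having rooted the component at $x$, I want to maximize (or rather, push above every other casting voter's power) the set $R_x$ of voters forced to delegate transitively to $x$ under every cost-minimizing delegation function, while simultaneously ensuring no \emph{other} forced casting voter accumulates comparable power. Deleting a voter $u$ can help in two ways: it can remove a voter who would otherwise be forced to cast (because $v(u) < d(u)$, or because $u$ is an unavoidable sink elsewhere) and whose presence would split the electorate, or it can prune a subtree so that reachability into a rival sink disappears and those voters instead flow toward $x$. I would show that the decision of which voters to delete is \emph{monotone} and \emph{local} enough that a layer-by-layer greedy procedure is optimal: process the remaining sinks and the "stubborn" vertices (those with $v < d$) in order of how cheaply they can be neutralized, delete them if the budget $k$ permits, and recompute the forced representative sets. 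Because the structure is a forest plus at most one cycle, each deletion's effect on reachability and on the cost-minimizing choice is confined to an easily identified subtree, so the recomputation — and the check that $x$'s power strictly exceeds that of every surviving casting voter — is polynomial. One also needs the budget feasibility check $\sum_{i \in C} v(i) + \sum_{i \notin C} d(i) \le \beta$ to still hold after deletions, which the greedy step can monitor directly using the per-sink minimization from \cref{del_rep_positive}.

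The main obstacle I anticipate is establishing that greedy deletion is genuinely optimal — that there is no subtle interaction where deleting a seemingly unhelpful voter now enables a cheaper neutralization of two rival sinks later, in a way a naive greedy order would miss. The exchange argument needs the observation that, with $\Delta = 1$, the "competition" for voting power is organized by the sink components of $\mathcal{D}(G)$, and deletions either eliminate a sink entirely or redirect a whole pendant subtree; these operations commute in their effect on $R_x$, so any feasible deletion set of size $\le k$ achieving the goal can be reordered to agree with the greedy ordering prefix-by-prefix. A secondary subtlety is the "(sole)" requirement: even after $x$ is forced to cast with large power, a tie in voting power with another casting voter must be broken, so the greedy step must verify strict dominance, which may require one extra deletion to shave a rival's power — this is handled by treating rival-power-reduction as just another type of neutralization move in the same layered framework. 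Once optimality of greedy is proved, polynomial running time follows from the bounded number of layers (at most $|N|$) and the linear-time reachability recomputations per layer.
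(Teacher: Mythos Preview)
Your plan contains a genuine conceptual gap about what deletions can accomplish when $\Delta \le 1$. You write that deleting a voter $u$ can ``remove a voter who would otherwise be forced to cast \ldots\ and whose presence would split the electorate'' or ``prune a subtree so that reachability into a rival sink disappears and those voters instead flow toward $x$.'' Neither effect is possible in a $\Delta \le 1$ graph: every voter has at most one outgoing edge, so deleting $u$ turns each in-neighbour of $u$ into a vertex of out-degree~$0$, i.e., a new sink. Nothing is ever \emph{redirected} toward $x$; the predecessors of a deleted voter become forced casting voters with power bounded by the size of their own pendant subtrees. Consequently the minimum guaranteed voting power of $x$ over all cost-minimizing delegation functions, call it $p(x)$, cannot be increased by any deletion (beyond the single step of deleting $x$'s out-neighbour to force $x$ into being a sink, which you do identify, though your parenthetical about when this is needed is garbled).

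The paper's proof exploits exactly this observation. It computes $p(x)$ once and treats it as fixed; the only useful deletions are then those that shrink the set $S$ of vertices that have at least $p(x)-1$ predecessors, i.e., the potential rivals. The greedy rule is concrete and quite different from your ``neutralize stubborn vertices in order of cheapness'': pick any $y\in S$; if $y$ has no predecessor in $S$, delete $y$ itself (its in-neighbours then become harmless sinks, each with fewer than $p(x)-1$ predecessors); otherwise delete the predecessor $z\in S$ of $y$ with the longest path to $y$ (so $z$'s own predecessors are not in $S$, and deleting $z$ simultaneously reduces the predecessor count of every vertex on the $z$--$y$ path). Recompute $S$ and repeat. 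Your layered-neutralization framework and the accompanying exchange argument are aimed at the wrong target and would need to be rebuilt around this rule. A minor side point: the budget $\beta$ plays no role here, since the control question is phrased over \emph{cost-minimizing} delegation functions rather than budget-feasible ones, so the monitoring step you describe is unnecessary.
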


Continuing from \cref{rem:str} our discussion on strategic control problems involving the alteration of possible delegations, we note that the positive results for instances of $\Delta \leq 1$ are also applicable here. For the addition of edges, the proof is analogous to the proof of \cref{control_pos1}. For the deletion of edges, 
one must consider each potential casting voter \( y \) that has a voting power at least as high as that of the designated voter $x$, under at least one cost-minimizing delegation function 
and then delete iteratively the incoming edges to \( y \), in descending order of the number of voters who would lose their paths to \( y \) as a result. This continues until \( y \)’s voting power is reduced to below that of \( x \).

\begin{proposition}
The variants of \textsc{Control by Adding/Deleting Voters} where the controller
can add or delete edges of the delegation graph are solvable in polynomial time if $\Delta \leq 1$.
\end{proposition}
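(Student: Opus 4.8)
The plan is to adapt the algorithms behind \cref{control_pos1,control:pos2}, exploiting that for $\Delta\le 1$ a delegation function is determined by the choice of casting voters alone: each delegating voter has a unique outgoing edge, so all delegations are forced, and every weakly connected component is an upwards-directed tree, possibly carrying a single extra cycle through its sink strongly connected component. As in \cref{control_pos1}, the first step is to verify the necessary condition that the designated voter $x$ is a casting voter under every cost-minimizing delegation function satisfying reachability; this can be decided in polynomial time, in the deletion case after first deleting, when profitable and within budget, the at most one outgoing edge of $x$ so as to force $x$ to be a sink. For the edge-addition variant I would only add edges that keep all out-degrees at most $1$, i.e., edges leaving vertices that are currently sinks; for the edge-deletion variant the relevant observation to keep track of is that deleting an edge $(u,y)$ turns $u$ into a sink and hence forces $u$ to become a casting voter.

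For edge addition I would mimic the proof of \cref{control_pos1} almost verbatim, replacing ``add an unregistered voter that delegates to $x$'' by ``add an edge from a current sink into the region that $x$ certainly represents''. First compute, exactly as there, the set $R^\star$ of voters that delegate to $x$ under \emph{every} cost-minimizing delegation function, read off from the costs and from the forced structure of each tree. An added edge from a sink $u$ into $R^\star\cup\{x\}$ pulls $u$, together with precisely those predecessors of $u$ whose costs force them to delegate along their unique paths, into $R^\star$, whereas an edge into any other vertex is useless for the controller. One then adds edges greedily in layers, from sinks adjacent to $R^\star\cup\{x\}$ outward, spending one unit of budget per newly attached sink; optimality follows from the same exchange argument as in \cref{control_pos1}. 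Having used the budget $k$ (or exhausted the useful edges), I would compare the resulting voting power of $x$ with the maximum voting power that any other casting voter can attain under some cost-minimizing delegation function, which the same structural analysis provides, and accept iff $x$ strictly dominates.

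For edge deletion I would follow the scheme sketched right before the statement: enumerate every potential rival $y\ne x$, that is, a casting voter whose voting power is at least that of $x$ in some cost-minimizing delegation function after the forced modifications, and for each such $y$ drive its largest possible voting power below the smallest possible voting power of $x$ by deleting incoming edges of $y$. The subtlety, and the main obstacle, is that deleting an incoming edge $(u,y)$ does not destroy the voting power it carries but relocates it to the new sink $u$, which may itself become a rival of $x$; hence the correct greedy target is not ``the incoming edge severing the most voters'' but a set of cuts leaving \emph{every} resulting fragment --- the shrunken $y$ and each freshly created casting voter --- with voting power strictly below that of $x$. This is the classical problem of partitioning each tree into pieces of bounded weight using a minimum number of edge deletions, solvable by a bottom-up dynamic program whose number of states and weight bounds are polynomial in $|N|$. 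Since distinct components are independent and do not influence one another's cost-minimizing solutions nor the power of $x$, I would solve these tree problems separately and aggregate them by a polynomially bounded selection over components to decide whether the total budget $k$ suffices; reachability is preserved automatically, since every severed sink is itself a casting voter. Arguing that this per-component dynamic program is correct and efficient, and that the induced rival set is handled consistently, is the crux; the remaining pieces are direct transcriptions of \cref{control_pos1,control:pos2}.
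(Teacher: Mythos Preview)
For edge addition your layer-by-layer greedy over sinks wired into $R^\star\cup\{x\}$ is exactly the analogy to \cref{control_pos1} that the paper itself invokes, so there is nothing to add there. For edge deletion you depart from the paper, and with cause: the paper's sketch takes each rival $y$ and ``delete[s] iteratively the incoming edges to $y$, in descending order of the number of voters who would lose their paths to $y$'', but as you point out, cutting $(u,y)$ relocates power to the fresh sink $u$, which may itself be a rival. On a path $v_1\to\cdots\to v_7\to y$ with $p(x)=5$ the paper's greedy peels off $v_7$, then must treat $v_7$ (power $7$), then $v_6$, and so on, burning four deletions, whereas the single cut $(v_4,v_5)$ already leaves two pieces of size $4$. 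Your recasting as minimum-edge-deletion tree partitioning into pieces of weight strictly below $p(x)$, solved by a bottom-up DP, is what the problem actually demands, so on this half your argument is not merely different but strictly tighter than the paper's. Two loose ends worth firming up: for components carrying a cycle, loop over which cycle edge (if any) to sever first, reducing each choice to the tree case; and treat vertices with $v(i)=d(i)$ asymmetrically in your weights --- as casting when they sit on paths toward $x$ (shrinking $x$'s guaranteed power) but as delegating when they sit on paths toward a rival (inflating that rival's worst-case power). Your ``polynomially bounded selection over components'' is in fact just the sum of the per-component minimum cut counts compared against $k$, since distinct components neither interact with one another nor with $p(x)$.
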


In this section we focused exclusively on controlling the power of casting voters, however, the power of delegating voters should not be underestimated. As highlighted by \citet{zhang2021power} and \citet{behrens2021temporal}, delegating voters can also control a substantial number of votes, impacting the election outcome. For example, a casting voter with direct delegations from $p-1$ voters is intuitively more powerful than one whose voting power of $p$ is achieved through an intermediary receiving exactly $p-2$ delegations, as the latter's power depends entirely on the intermediary. Problems of ensuring a preferred voter gains multiple delegations, even without casting a ballot, naturally arise. Our results extend to such scenarios as well, with proofs being analogous.

\section{Conclusion}
Our work focused on computational problems related to determining feasible delegation functions and controlling the voting power of participants. Specifically, we first addressed questions around the existence of sufficiently good delegation functions (under cost and reachability constraints) considering well-established desiderata from the Liquid Democracy literature. We then concentrated on controlling the voting power of a preferred casting voter by adding or deleting voters to achieve the controller's goal. 
Our results provide a comprehensive understanding of the tractability landscape of these problems' families. 
For most of the considered problems, tractability is characterized (unless P=NP) by allowing each voter to approve at most one other voter as a potential representative.

Even beyond the Computational Social Choice context, the related graph-theoretic problems are of significant interest, making them a promising algorithmic direction
with respect to, e.g., approximation or parameterized algorithms. Our results could serve as a starting point for analyzing the parameterized complexity of the examined problems.
Indicatively, the hardness results from \cref{sec:constrained} show para-NP-hardness for \(\ell\) and the positive results can be extended to parameterize by the number of voters with an out-degree greater than one and their maximum out-degree. \cref{const_alpha_del_reach_in_p} is also an XP[$\alpha$] algorithm, and \cref{control-hard2} is also a W[1]-hardness result for the number of voters that can be deleted. We also underline that our hardness results hold for restricted classes of delegation graphs, such as layered directed graphs with a few layers or even directed bipartite graphs. 
Moreover, the results on \textsc{cav} and \textsc{cdv} hold even under additional restrictions on path lengths or voting power, along with the reachability constraint, in the formulation of these problems.

Our results can be complemented by simulating how factors like path length, voting power bounds, and budget constraints impact the frequency of feasible solutions or how one parameter influences others. 
Exploring how edge existence probabilities, cost distributions, or the number of voters impact feasibility in synthetic datasets could also be insightful. 
Similarly, considering costs, direct voting is democratically optimal but expensive, while (unconstrained) \textsc{Delegate Reachability} offers much cheaper solutions. Solutions with constraints, such as bounded voting power or lengths of delegation paths, fall in the middle. 
Comparing these costs could reveal  the savings of Liquid Democracy over direct voting, along with the additional costs introduced by constraints aimed at yielding intuitively better solutions.

\vfill

\section*{Acknowledgements}
This project has received funding from the European Research Council (ERC) under the European Union’s Horizon 2020 research and innovation programme (grant agreement No 101002854). Georgios Papasotiro\-poulos is supported by the European Union (ERC, PRO-DEMOCRATIC, 101076570). Views and opinions expressed are however those of the author(s) only and do not necessarily reflect those of the European Union or the European Research Council. Neither the European Union nor the granting authority can be held responsible for them. The authors thank Piotr Faliszewski and Piotr Skowron for their helpful feedback and discussions.
\begin{center}
  \includegraphics[width=5cm]{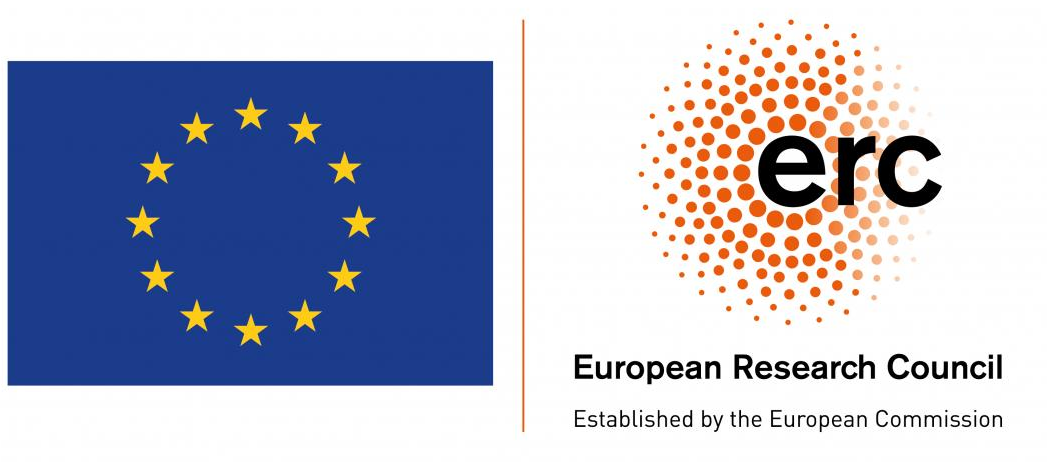}
\end{center}

\bibliographystyle{abbrvnat}
\bibliography{sample1}

\newpage
\noindent {\huge{\textbf{Appendix}}}
\section*{Missing Proofs from \Cref{sec:del_reachability}}

\subsection{Proof of \Cref{del_rep_positive}}

We will show that the claim holds by presenting a greedy method for computing a cost-minimizing solution. We start by constructing the decomposition of the delegation graph into strongly connected components. 

Recall that a subgraph of a directed graph is strongly connected if, for every pair of vertices $u$ and $v$ within it, $u$ is reachable from $v$ and vice versa. The \emph{decomposition} of a graph $G$ into strongly connected components is a new directed graph, $\mathcal{D}(G)$, which has one vertex for every maximal strongly connected component of $G$. There is a directed edge between two vertices in $\mathcal{D}(G)$ if there is an edge from any vertex in one strongly connected component to a vertex in the other in $G$. Such a decomposition can be constructed by classic algorithms in linear time (see e.g. \citep{clrs}) and is acyclic. We call the vertices of the decomposition having zero out-degree \emph{sinks}. Each sink corresponds to a set of vertices in $G$ that are mutually reachable but have no path to vertices outside of the sink in $\mathcal{D}(G)$.

Given a delegation graph $G$, any set of casting voters of a feasible solution for \textsc{Delegate Reachability} contains at least one vertex from each of the sinks of the decomposition $\mathcal{D}(G)$. If this were not the case, the members of this sink in the decomposition would not reach any casting voter, as those vertices do not have a path to the outside of the sink. Moreover, selecting one vertex in a sink component is sufficient to satisfy its members due to the (strong) connectivity property. For cost-minimizing delegation functions, note that when no voter has a voting cost less than the delegating cost, selecting exactly one vertex from sinks is the optimal choice.
    
We proceed by selecting representatives for each sink, but, first, we identify all voters whose voting cost is strictly less than their delegating cost and designate them as casting voters. This step is justified by the fact that choosing these voters as casting voters is always part of the cost-minimizing solution. For each sink $S$, if we have already chosen at least one voter from $S$ as a casting voter in the aforementioned step, then all voters in $S$, as well as their predecessors, already have a feasible path to a casting voter. If $S$ does not yet have a representative, we need to select a voter from it as a casting one. As all remaining voters have a delegating cost that is no more than their voting cost, a cost-minimizing solution includes exactly one voter as a casting one. For a voter $v$, the total cost incurred by choosing $v$ as a casting voter equals the sum of $v$'s voting cost plus the delegating costs of the other voters in the component. Consequently, for each sink of the decomposition that has not yet been assigned a casting voter, we nominate the voter who minimizes the sum of their voting cost plus the delegating costs of the other voters in the sink, as the delegating cost of voters of the component, not belonging to the sink will appear when considering any of the voters in the sink as casting. It is routine to check that the obtained solution is feasible and that it minimizes the total cost.

\subsection{Proof of \Cref{del_rep_neg}}
We prove hardness by a reduction from \textsc{Vertex Cover}. Consider an instance $\mathcal{I}$ of \textsc{Vertex Cover} with a given graph $G'$ with vertex set $V$ and an integer $k$. Let us define what we call an \emph{encoding} of $\mathcal{I}$. First, for every vertex $v \in V$, we construct a \emph{vertex voter} corresponding to $v$. Additionally, for every edge $e$ in $G'$, we construct an \emph{edge voter} corresponding to $e$. If an edge $e$ in $G'$ is between vertices $i$ and $j$, we construct edges, i.e., possible delegations, from the voter corresponding to $e$ to the voters corresponding to $i$ and $j$.
Then, we set the voting costs of all of the voters to $1$ and we set their delegating costs to $0$. Finally, we set $\beta=k$ and $\alpha=|V|-k$.

Suppose that $\mathcal{I}$ is a positive instance of \textsc{Vertex Cover}. Then, we can select $k$ vertices of $V$ that include the endpoints of all the edges in $G'$. Consequently, we can select the vertex voters corresponding to these vertices as casting voters, satisfying the budget constraint of $k$. This ensures that all the edge voters reach one of the selected casting voters. Under the solution implied by the described set of casting voters, at most $|V| - k$ vertex voters remain unrepresented, making the proposed solution feasible due to the specified value of $\alpha$.

Now, suppose $\mathcal{I}$ is a negative instance of \textsc{Vertex Cover} and, towards a contradiction, assume there is a solution to the encoding of $\mathcal{I}$ that leaves at most $|V| - k$ voters unrepresented while having at most $k$ casting voters, denoted as set $S$. Note that in such a delegation function, it is without loss of generality to assume that all casting voters belong to the set of vertex voters of the encoding of $\mathcal{I}$. 

To prove this, consider a solution that includes edge voters as casting voters. We can then form a new delegation function by forcing each of these edge voters to delegate to one of their out-neighbors, making the neighbor a casting voter instead (if they are not already). This adjusted solution remains feasible and has a cost no greater than the original, as both vertex and edge voters share the same voting cost.

Continuing with the reverse direction of the reduction, since $\mathcal{I}$ is a negative instance of \textsc{Vertex Cover}, there is at least one edge voter, say $v$, who cannot delegate to any in $S$. Thus, we have at least $|V| - k + 1$ voters in the encoding of $\mathcal{I}$ who are unrepresented in the suggested solution—these include the $|V| - k$ vertex voters not selected, plus $v$.

\subsection{Proof of \Cref{dp-maxlength}}
Suppose that the delegation graph $G$ is connected; we will discuss the case where $G$ is not connected later in the proof. Say also that the input graph is an upwards-directed tree. If it is not, then by the fact that $\Delta=1,$ the graph has exactly one cycle, the vertices of which might have incoming edges but no outgoing ones. We then say that the cycle appears as the sink of an upwards-directed tree. Since the cycle appears as a sink, at least one vertex of it should be selected as a casting voter. By this inclusion, we can safely delete the outgoing edge of this vertex, as the corresponding voter will not delegate. But then we just end up with an upwards-directed tree. Executing the following procedure, which applies to upwards-directed trees, for every possible choice of a single vertex from the cycle that is a sink, suffices to solve the problem for general connected delegation graphs of $\Delta=1$.

For a given upwards-directed tree $G$, we denote by $T_v$ the induced (upwards-directed) subgraph of $G$ that is rooted at vertex $v$ and contains all (direct and indirect) predecessors of $v$ in $G$.
Also we denote by $T_v^{i}$, the induced (upwards-directed) subgraph of $T_v$ that is rooted at $v$ and includes all vertices among the first $i$ in-coming neighbors of $v$ and any of their predecessors (but none of the in-coming neighbors of $v$ after the $i$-th, neither any of their predecessors). This definition holds for $i=0$ if $v$ is a leaf and for $i\in \{1,2,\dots,p(v)\}$ if not, where $p(v)$ the in-degree of $v$, i.e. the number of direct predecessors. We denote by $v_{i}$ the $i$-th in-coming neighbor of vertex $v$. At what follows we define a recursive dynamic programming algorithm. 

Let $dp[v,i,k]$ represent the minimum cost for selecting casting voters among the vertices in $T_v^i$ in a way that each non-selected vertex, i.e., delegating voter, in $T_v^i$ has a path of length of at most $\ell$ to a casting voter from $T_v^i\setminus\{v\}$ or a path of length of at most $k$ to $v$. Note that definitely $v$ should be selected as a casting voter to ensure feasibility since it is the root of the considered subtree. The minimum cost value regarding $G$ lies in $dp[x,p(x),\ell],$ where $x$ corresponds to the root of the delegation graph $G$. Note that we have assumed that $G$ is connected; if it is not, we must ensure feasibility at each connected component, and the cost of the cost-minimizing solution equals $\sum_{x\in R(G)}dp[x,p(x),\ell],$ where $R(G)$ corresponds to the set of roots of the connected components of $G$, where those components can be considered to be upwards-directed trees.

We will use a bottom-up approach to fill the table. Consider a vertex $v$. If $v$ is a leaf then we set $dp[v,0,k]=v(v)$, i.e., equal to the voting cost of the corresponding voter, for any possible $k$. This is because the subtree of $G$ rooted at a leaf vertex $v$ can only be feasible if $v$ is selected as a casting voter, and then, trivially, all vertices of the subtree (the empty subtree) do have a path of length at most $k$, for any positive $k$, to $v$.

Now say that $v$ isn't a leaf and consider first the case where $i=1$. Then we must pay $v(v)$ to make $v$ a casting voter, since it is a root of the considering subtree. Apart from that, there are two options when considering $T_v^1$: either $v_1$ will delegate to $v$ or $v_1$ will vote directly as well. 
In the first case, we must also pay $d(v_1)$ while ensuring that the maximal path in $T_{v_1}$ is of length at most $k-1$, which leads to paths of length at most $k$ in $T_v^1$. This quantity corresponds to $d(v_1)+dp[v_1,p(v_1),k-1],$ but, since $v_1$ has been counted as a casting voter when computing the value of $dp[v_1,p(v_1),k-1]$ we now also need to subtract the cost $v(v_1),$ from it since we are considering the case where $v_1$ delegated (and that's why we added $d(v_1)$ instead). All in all we get $v(v)+d(v_1)+dp[v_1,p(v_1),k-1]-v(v_1)$. Regarding the second case, we only need to pay $v(v)$ while also ensuring feasibility in $T_{v_1}$, so we get $v(v)+dp[v_1,p(v_1),\ell],$ as the voting cost of $v_1$ has been already counted in $dp[v_1,p(v_1),\ell].$ Among the two options we set $dp[v,i,k]$ to be equal to the value of the one achieving the minimal value while being feasible.

We now move to the case where $i>1$. Note that $v$ has already been considered as a casting voter when computing $dp[v,i-1,k]$ for some $k$. But, again there are two options based on what $v_{i+1}$ will be asked to do, when considering $T_{v_{i+1}}$: either $v_{i+1}$ will delegate to $v$ or $v_{i+1}$ will also vote directly. In the first case, we also need to pay $d(v_{i+1})$ while also ensuring that the maximal path in $T_{v_{i+1}}$ is of length at most $k-1$, which leads to paths of length at most $k$ in $T_{v_{i+1}}$. As before, this quantity corresponds to $dp[v_{i+1},p(v_{i+1}),k-1],$ but since $v_{i+1}$ has been counted as a casting voter there we also need to subtract the cost $v(v_{i+1})$. Additionally, we need to ensure feasibility in the subgraph $T_v^i$ so we add the quantity $dp[v,i,\ell]$.
All in all we get $d(v_{i+1})-v(v_{i+1})+dp[v,i,\ell]+dp[v_{i+1},p({i+1}),k-1]$. In the second case, we also set $v_{i+1}$ as a casting voter. This means that we now have the expression $dp[v_{i+1},p(v_{i+1}),k]+dp[v,i,\ell].$ Among the expressions created for the two discussed cases, again, we set $dp[v,i,k]$ to be equal to the feasible one achieving the minimum value among them. Finally, using a simple backtracking procedure, from the cost of the cost-minimizing solution we can construct the solution itself. The fact that the suggested procedure is polynomial in time is immediate.

\subsection{Proof of \Cref{length-hard}}
We show that the problem we consider is NP-hard by reduction from \textsc{3-SAT}. Take a 3-CNF formula $\varphi$ with the set of variables $X= \{x_0, \dots, x_m \}$ and the set of clauses $\mathcal{C}=\{C_0, \dots, C_m\}$. We will also represent each clause $C_j$ as $\{L_j^1, L_j^2, L_j^3\}$, where each $L_j^i$ corresponds to a different literal in $C_j$. Let us construct what we call an \emph{encoding} of $\varphi$. Now, for each clause $C_j\in \mathcal{C}$ we add three voters, i.e., a \emph{clause voter} $v_{C_j}$, as well as two \emph{clause dummy} voters $d_{C_j}, d_{C_j}'$ corresponding to $C_j$. Additionally, for each variable $x_i\in X$, we add two \emph{literal voters} corresponding to $x_i$ and $\neg x_i$, namely voters $v_{x_i}$ and $v_{\neg x_i}$. Moreover, for every clause $C_j$, we construct edges, i.e., possible delegations, from the clause voter $v_{C_j}$ to clause-dummy voters $d_{C_j}, d_{C_j}'$. Also there are edges, i.e., possible delegations, from $d_{C_j}$ to the literal voter corresponding to $L_j^1$ and from $d_{C_j}'$ to the literal voters that correspond to $L_j^2$ and to $L_j^3$. Finally, for every $x_i\in X,$ we add the following pair of edges: one from $v_{x_i}$ to $v_{\neg x_i}$ and from $v_{\neg x_i}$ to $v_{x_i}$. To complete the construction of the encoding of $\varphi$, we set voting costs to $1$ and delegating costs to $0$ and let $\beta=|X|$ and $\ell=2$. Note that in this case $\Delta=2$. Furthermore, the encoding of $\varphi$ consists of $2|X| + 3|\mathcal{C}|$ voters and the shortest path from each clause voter to a literal voter is of length exactly 2.

Suppose first that $\varphi$ is satisfiable and take a satisfying valuation $S$ over $X$. Then, for every pair of literal voters $v_{x_i}, v_{\neg x_i}$ we select $v_{x_i}$ as a casting voter if $x_i$ is true in $S$, and $v_{\neg x_i}$ otherwise. It is easy to see that this selection is budget feasible. Observe also that then each clause voter is at a distance of exactly two from a literal voter that was selected to cast a ballot. Moreover, dummy voters are also at a distance of at most two from casting voters, and literal voters are at a distance of at most one. It is then immediate that the suggested delegation function is feasible as it allows for paths to a casting voter of length at most two. As a result, the encoding of $\varphi$ is a positive instance as well and this concludes the proof of the forward direction of the reduction.

Suppose now that $\varphi$ is not satisfiable. Regarding its encoding, we notice that by construction exactly one of each pair of literal voters $x_{x_i}, v_{\neg x_i}$ needs to be selected in a feasible solution as otherwise at least one of them would not have a path to a casting voter. Additionally, due to the budget constraint, any feasible solution will include casting voters coming only from the set of literal voters. Consider then the set of casting voters of such a solution and notice that it corresponds to a valuation over $X$ that does not satisfy $\varphi$. But then, by construction, some clause voter is at a distance three from the closest casting voter, which contradicts the feasibility of the solution regarding the specified maximum length of a path to a casting voter.

\subsection{Proof of \Cref{dp-power}}
    The proof that follows concerns the case where the delegation graph $G$ is an upwards-directed tree; the generalization to any graph, under the constraint of $\Delta=1$, is identical to the analogous generalization appearing in the proof of \Cref{dp-maxlength}.

Let $dp[v,i,k]$ correspond to the minimum cost of creating a solution among the following vertices in $G$: $v$, its first $i$ in-coming neighbors and all of their predecessors---which could be seen as the connected component of $v$ after the deletion of edges between $v$ and its out-neighbors as well as its in-neighbors after the $i$-th. This solution is assumed to give a voting power upper bounded by $k$ to $v$ and upper bounded by $\ell$ to all other casting voters of the considered subgraph. This is justified by the fact that vertices that will be considered afterwards can increase the voting power of $v$ but not the voting power of any other vertex that has been already specified as a casting voter. We will compute a value for $dp[v,i+1,k]$ based on $dp[v,i,\cdot]$ and $dp[v_{i+1},p(v_{i+1}),\cdot]$, where $v_{i+1}$ is the $(i+1)$-th in-neighbor of $v$ and $p(v_{i+1})$ is the number of its in-neighbors. 
Regarding the size of the table $dp,$ note that for any fixed $v$, the size of the second dimension of $dp$ is upper bounded by $p(v)$ (and lower bounded by $0$ if $v$ is a leaf or $1$ otherwise) and the size of the third dimension is upper bounded by $\ell\leq n$ (and lower bounded by $1$). The time required for computing the value of each cell of the table will be clearly polynomial in terms of the input, so the claim regarding the computational
complexity of the procedure is immediate.

We begin with examining the edge between $v$ and its first predecessor. Hence, we are focusing on the subgraph $T_v^1;$ for the relevant definition we refer to the proof of \cref{dp-maxlength}. So, we should pick $v$ as a casting voter since it is a sink, meaning a vertex of zero out-degree, in the considered subgraph. There are two possible options regarding the considered edge $(v_1,v)$: either use it by making $v_1$ to delegate to $v$ or not, i.e., force $v_1$ to cast a ballot as well. Among the two, we pick the one of minimal cost that is feasible. 

Regarding the first case, asking voter $v_1$ to delegate to $v$, then we must also ensure that the voting power of $v_1$ is at most $k-1,$ in order to make $v$ vote with a power of at most $k$. Notice that by the fact that we have not yet considered any in-coming neighbors of $v$ other than $v_1,$ the voting power of $v$ only comes from voters in $T_v^1$.
Then, using a reasoning similar to the one presented in the proof of \cref{dp-maxlength}, we obtain the following relation:
\begin{align*}
    dp[v,1,k]=v(v)+d(v_1)-v(v_1)+dp[v_1,p(v_1),k-1].
\end{align*}
On the other hand, if we ask both $v$ and $v_1$ to cast a ballot, then we should ensure that the voting weight of $v_1$ is no more than $\ell$ to be feasible. Again, in analogy to \cref{dp-maxlength}, the following holds: \begin{align*}
    dp[v,1,k]=v(v)+dp[v_1,p(v_1),\ell].
\end{align*}

We now move to the case $i> 1.$
We will consider two possible choices when examining the edge between $v$ and $v_{i+1}$: use the edge from $v_{i+1}$ to $v$ or not. As before, among the two we pick the cheapest feasible option. First, say that we use the edge from $v_{i+1}$ to $v$, i.e. that $v_{i+1}$ will delegate to $v$. Note that while computing $dp[v,i,\cdot]$ we have already taken into consideration the fact that $v$ should be a casting voter.
To bound the voting power of $v$ by at most $k$, for a fixed value of $k$, we should bound the voting power of $v$ from $T_v^i,$ by a number $x\leq k$ and the voting power of $v_{i+1}$ by $k-x$. Additionally, we need to add the delegating cost of $v_{i+1}$ as the corresponding voter will delegate to $v$ and also subtract their voting cost, as this has been counted in $dp[v_{i+1},p(v_{i+1}),\cdot]$ since $v_{i+1}$ was a source in the subgraph considered for the computation of the value appearing in the aforementioned cell. Then, the following holds: 
\begin{align*}
   &dp[v,i+1,k]=\\
   &min_{x\in \{1,2,...,k-1\}} \{dp[v,i,x]+dp[v_{i+1},p(v_{i+1}),k-x]\}+d(v_{i+1})-v(v_{i+1}). 
\end{align*}

Next, suppose that we don't use the edge from $v_{i+1}$ to $v$, i.e., that $v_{i+1}$ will be finally asked to cast a ballot. Then we need to ensure that the voting power of $v_{i+1}$ is at most $\ell$ in order to have a feasible solution. Obviously, we don't need to add any delegating costs. Moreover, the voting costs of $v$ and $v_{i+1}$ have already been counted in $dp[v,i,\cdot]$ and $dp[v_{i+1},p(v_{i+1}),\cdot]$ respectively, so we have the following: \begin{align*}
dp[v,i+1,k]=dp[v,i,k]+dp[v_{i+1},p(v_{i+1}),\ell].
\end{align*}

Finally, we also set $dp[v,0,k]=v(v)$ if $v$ is a leaf, for any $k\geq 1$ and we also note that the cost of the cost-minimizing solution lies in $dp[x,p(x),\ell];$ a backtracking procedure can recover the solution itself.

\subsection{Proof of \Cref{power-hard}}
For clarity, we firstly prove the hardness of \textsc{Bounded Power} for $\ell=4$ and at the end we generalize the construction to work for larger $\ell$. We prove the hardness of \textsc{Bounded Power} by reduction from \textsc{Vertex Cover}. There, for a given graph of $\nu$ vertices and $\mu$ edges and a parameter $k$, the goal is to select $k$ out of the $\nu$ vertices of the graph covering all of its edges. Consider an instance $\mathcal{I}$ of \textsc{Vertex Cover} on 3-regular graphs, we create the following encoding of $\mathcal{I}$ into an instance of \textsc{Bounded Power} problem: 
First, for every vertex of $\mathcal{I}$, we construct a \emph{vertex voter} corresponding to $v$. Moreover, for every edge $e$ of $\mathcal{I}$ we construct an \emph{edge voter} corresponding to $e$. Furthermore, if an edge $e$ of $\mathcal{I}$ is between vertices $i,j$, we construct edges, i.e., possible delegations, from the voter corresponding to $e$ to the voters who correspond to $i$ and to $j$. We finally add a set $D$ of $\nu$ \emph{dummy voters}, who will vote at a cost of $0$. For each vertex voter we add an edge to exactly one dummy voter in a way that each voter from $D$ has in-degree exactly 1 from vertex voters. Finally, for each voter in $D$ we add two additional voters, who are willing to delegate to them.
We set the rest voting costs to $1$ and all delegating costs to $0$ and also we set $\beta=k$ and $\ell=4$. Note that $\Delta=2$.

Suppose that $\mathcal{I}$ is a positive instance of \textsc{Vertex Cover}. Then, consider a delegation function that forces the vertex voters corresponding to vertices of the cover as well as the voters from $D$ to cast a ballot. The cost of this delegation function is no more than $k$. Under it, edge voters can all be represented by the vertex voters that have been selected to cast a ballot while the vertex voters that have not been selected to cast a ballot can be represented by their corresponding dummy voters. To prove feasibility of the suggested delegation function it remains to show that the voting power of each casting voter is bounded by $\ell$. Notice that the voting power of voters in $D$ will be at most $4$: exactly $4$ for those who represent vertex voters and $3$ for the rest (as they represent themselves together with the added $2$ voters that are only willing to delegate to them). Regarding casting vertex voters, those will have a voting power of at most $4$, representing themselves as well as at most $3$ edge voters each, by the fact that the graph in $\mathcal{I}$ is 3-regular.

For the reverse direction, suppose that $\mathcal{I}$ is a negative instance of \textsc{Vertex Cover}. Then, due to the budget constraint, at most $k$ voters from $V$ can be selected to cast a ballot. Since those do not correspond to a cover in $\mathcal{I}$, at least one edge voter, say $e$, will be represented by a voter from $D$ that will cast a ballot. This breaks the upper bound of $\ell=4$ on the voting power of casting voters as there will be a dummy voter that will also represent themself, the added $2$ voters that can only delegate to them, as well as the vertex voter that appears in the path from $e$ to them. Therefore no feasible delegation function exists, proving that the encoding of $\mathcal{I}$ is a negative instance of \textsc{Bounded Power}.

Increasing the number of voters approving each dummy voter from $2$ to $\ell-2$, and applying the same arguments, proves the theorem for any value of the parameter $\ell$ and completes the proof.

\section*{Missing Proofs from \Cref{sec:control}}

\subsection{Proof of \Cref{control-hard1}}
We will prove hardness for the problem of adding $k$ voters, from a predefined set, to make a designated voter $x$ the unique super-voter under every feasible cost-minimizing delegation function. We will reduce from \textsc{Vertex Cover}. Given an instance of \textsc{Vertex Cover} $(G,k)$ we create an instance $\mathcal{I}$ of the considered election control problem that contains the following: 
    \begin{itemize}
\item A set of voters $E$ containing one voter of delegating cost equal to $0$ and voting cost equal to $1$, for each edge in $G$.
\item A set of voters $V$ containing one voter of delegating cost equal to $0$ and voting cost equal to $1$, for each vertex in $G$. All voters in $V$ are considered being unregistered, meaning that between those we should choose some to add in the election. 
\item A set $D$ of $k+m-1$ voters, each of delegating cost equal to $0$ and voting cost equal to $1$.
\item Two special voters $x$ and $y$ of voting cost equal to $0$ and delegating cost equal to $0$. 
\item For each edge $e$ that is incident to vertices $u$ and $v$ in $G$, we add two possible delegations: one from the voter of $E$ that corresponds to $e$ towards the voter of $V$ that corresponds to $v$ and one from the voter of $E$ that corresponds to $e$ towards the voter of $V$ that corresponds to $u$. 
\item There is an edge from each voter in $V$ to voter $x$.
\item There is an edge from each voter in $D$ to voter $y$. 
\end{itemize}

To prove the forward direction, suppose that there is a vertex cover of size $k$ in $G$. Adding to the election the $k$ voters from $V$ that correspond to vertices from the vertex cover has the following implications. Every each of those $k$ voters has a unique path (of unit length) to $x$. Additionally, each voter in $E$ has a path to $x$, via an added voter from $V$. Voters $x$ and $y$ will be casting voters as they have less voting that delegating cost. On the other hand, in every cost minimizing delegation function, each voter from $V, E$ and $D$ that has a path to a casting voter will delegate because their delegating cost is less than their voting one. Voters from $D$, and only those, will delegate to $y$, under every cost-minimizing delegation function. As a result, the voting power of $y$ will be $1+k+m-1$ since $y$ represents herself as well as all voters from $D$, whereas the voting power of $x$ will be $1+k+m$ as $x$ represents all other voters. This shows that $x$ is the sole super-voter under every cost-minimizing delegation function and completes the forward direction of the proof.

For the reverse direction, suppose that $G$ doesn't have a vertex cover of size $k$. We will show that there is no way to add $k$ voters from $V$ in $\mathcal{I}$ so as to make $x$ the sole super-voter under every cost-minimizing delegation function. Pick any arbitrary set of $k$ voters from $V$ to add in the election. Any cost-minimizing delegation rule would ask voters from $D$ to delegate to $y$. Hence the voting power of $y$ is exactly $k+m$ under every delegation function. Furthermore, any cost-minimizing delegation rule would ask the selected $k$ voters from $V$ to delegate to $x$, as well as all voters from $E$ that do have a path to $x$. Since the selected $k$ voters do not correspond to a vertex cover in $G$, at least one edge of $G$ doesn't have one of its endpoints in the selected set. Then, for at least one of the voters from $E$ it holds that both of her approved voters aren't selected to be included in the election. As a result, the out-degree of the corresponding voter equals $0$ and she should cast a vote in every feasible cost-minimizing delegation function. Hence, at most $m-1$ voters from $E$ will delegate to $x$ and, in turn, her voting power will be upper bounded by $1+k+m-1$. Consequently, $x$ will not be the unique super-voter.

\subsection{Proof of \Cref{control-hard2}} 
We will show hardness for the problem of deleting $k$ voters, to make a designated voter $x$ the unique super-voter under every feasible cost-minimizing delegation function. To prove the statement we will reduce from \textsc{Clique}. Given an instance of \textsc{Clique} $(G,k)$ we create an instance $\mathcal{I'}$ of the considered election control problem that contains the following: 
    \begin{itemize}
\item A set of voters $E$ containing one voter of delegating cost equal to $0$ and voting cost equal to $1$, for each edge in $G$.
\item A set of voters $V$ containing one voter of delegating cost equal to $0$ and voting cost equal to $1$, for each vertex in $G$. 
\item A set $D$ of $(n-k)+(m-\binom{k}{2})+1$ voters, each of delegating cost equal to $0$ and voting cost equal to $1$.
\item A set of special voters $x$ and $y_1,y_2,\dots,y_k,y_{k+1}$ of voting cost equal to $0$ and delegating cost equal to $0$. 
\item For each edge $e$ that is incident to vertices $u$ and $v$ in $G$, we add two possible delegations: one from the voter of $E$ that corresponds to $e$ towards the voter of $V$ that corresponds to $v$ and one from the voter of $E$ that corresponds to $e$ towards the voter of $V$ that corresponds to $u$. 
\item There is an edge from each voter in $V$ to every voter in $\{y_1,y_2,\dots,y_{k+1}\}$.
\item There is an edge from each voter in $D$ to voter $x$. 
\end{itemize}

We call $Y$ the set $\{y_1,y_2,\dots,y_{k+1}\}$. To prove the forward direction, suppose that there is a clique of size $k$ in $G$. Deleting from the election the $k$ voters from $V$ that correspond to vertices from the clique has the following implications. Every each of the remaining $n-k$ voters from $V$ has a unique path (of unit length) to each vertex from $Y$. Additionally, there are $m-\binom{k}{2}$ voters from $E$ having a path to each vertex from $Y$, via a voter from $V$. The rest $\binom{k}{2}$ voters from $E$ correspond to edges of the clique and hence both voters from $V$ that correspond to the endpoints of the analogous edges have been deleted. As their out-degree equals $0$ they have to cast a ballot under every feasible cost-minimizing delegation function. Moreover, voters $x$ and voters in $Y$ will be casting voters as well as they have less voting that delegating cost. On the other hand, as before, in every cost minimizing delegation function, each voter from $V, E$ and $D$ that has a path to a casting voter will delegate because their delegating cost is less than their voting one. Voters from $D$, and only those, will delegate to $x$, under every cost-minimizing delegation function. As a result, the voting power of $x$ will be $1+(n-k)+(m-\binom{k}{2})+1$ since $x$ represents herself as well as all voters from $D$, whereas the voting power of any voter in $Y$ will be at most $1+(n-k)+(m-\binom{k}{2})$ under any delegation function as voters in this set represent all other delegating voters. This shows that $x$ is the sole super-voter under every cost-minimizing delegation function and completes the forward direction of the proof.

For the reverse direction, suppose that $G$ doesn't have a clique of size $k$. We will show that there is no way to delete $k$ voters from $V$ in $\mathcal{I}$ so as to make $x$ the sole super-voter under every cost-minimizing delegation function. Before that, we highlight that in order to make $x$ the unique super-voter it only makes sense to attempt deleting voters that have paths towards voters in $Y$. Additionally, it is without loss of generality to assume that no voter from $Y$ will be deleted; this is because the voters in that set are more than $k,$ and they all have the same predecessors, so deleting only a subset of them will not be helpful towards making $x$ the unique super-voter. Moreover, it is always better to delete from $V$ rather than from $E$, as a deletion of a voter from $E$ can cause strictly less decrease in the voting power of a voter in $Y$ than the deletion of its successor in $V$. Pick any arbitrary set of $k$ voters from $V$ to add in the election. Any cost-minimizing delegation rule would ask voters from $D$ to delegate to $x$. Hence the voting power of $x$ is exactly $1+(n-k)+(m-\binom{k}{2})+1$ under every delegation function. Furthermore, any cost-minimizing delegation rule would ask the remaining $n-k$ voters from $V$ to delegate to a vertex from $Y$, as well as all voters from $E$ that do have a path to a vertex in that set. Since the deleted $k$ voters do not correspond to a clique in $G$, at most $\binom{k}{2}-1$ edges of $G$ have both of its endpoints in the selected set. Then, for at most $\binom{k}{2}-1$ voters from $E$ it holds that both of her approved voters are deleted from the election and as a result, the remaining $m-\binom{k}{2}+1$ voters will delegate to a vertex from $Y$, under at least one feasible cost-minimizing delegation function. Hence, the voting power of a voter in $Y$ will be at least $1+(n-k)+(m-\binom{k}{2}+1)$. Consequently, $x$ will not be the unique super-voter under at least one delegation function.

\subsection{Proof of \Cref{control_pos1}}
In the examined problem, one should first ensure that the designated voter $x$ is indeed a casting voter before working towards making her a super-voter. Trivially, if $x$ is yet unregistered, she should definitely be one of the voters to be added, so it is necessary to include $x$ by reducing the available budget of $k$ voters by one. So, we assume that $x$ is indeed a registered voter. On top of that, we need to check if $x$ is a casting voter in every cost-minimizing delegation function that satisfies the restrictions of \textsc{Delegate Reachability}, in the original instance containing only registered voters. This can be done in polynomial time by \cref{casting_for_all} (appearing in the proof of \cref{control:pos2}). If $x$ doesn't satisfy this condition, then the inclusion of additional voters in the electorate cannot make her a casting one, so the instance is a negative one. 

 Now, suppose that $x$ is a casting voter in every feasible cost-minimizing delegation function. Then, it is safe to delete any out-going edges of $x$ making $x$ the root of an upwards-directed tree, say $T$. The goal then is to make her the only super-voter, by adding unregistered voters to the electorate. As $x$ is a casting voter, there is a set of voters who will delegate to $x$ in every cost minimizing delegation function. Those vertices can be determined as done in \cref{control:pos2}, and have been called $D(x)$ there. Among the unregistered voters we only add voters with voting cost strictly larger than delegating cost and with a path towards a vertex from $D(x)$ (either of unit length or that goes through other unregistered voters), as only such voters will necessarily delegate to $x$ under every delegation function and hence are able to increase the voting power of $x$. We call those voters $T_x$. We suggest the greedy algorithm that adds unregistered voters by layers: first adds as many as possible from $T_x$ that have a direct edge to a vertex from $D(x)$, then adds as many as possible from the remaining in $T_x$ that have a direct edge to a voter added in the previous step, and continues iteratively until adding $k$ voters. The greedy strategy is sufficient as there is no advantage from adding a voter that has a path to $x$ through unregistered voters, unless all of those unregistered have already been added.

\subsection{Proof of \Cref{control:pos2}}
The procedure that we are going to use consists of three parts: (a) ensure that $x$ is a casting voter in every cost minimizing feasible, under the reachability constraint, delegation function, (b) compute the lower bound of the voting power that $x$, as a casting voter, has in each such delegation function, (c) delete appropriate voters to make $x$ the sole super-voter under every such. The first part is necessary since if $x$ is delegating in at least one cost-minimizing delegation function, then he cannot be the sole super-voter under every such. The second is a necessary component for solving the third part, which resolves the computational problem of interest.

We begin by noticing that if $x$ is not a casting voter under every cost-minimizing delegation function then we can make him such by deleting the voter that $x$ approves for being her representative, and in turn also reducing the available budget $k$ by one. But first, we need to determine whether this step is necessary to be done. In a cycle $\mathcal{C}$ of a (weakly) connected component of a delegation graph that satisfies $\Delta=1$ such that all vertices of which have a voting cost no less than their delegating cost, we call a vertex $u$ as \emph{forced voter} if it is the (only) one minimizing the following expression: $v(u)+\sum_{i\in \mathcal{C}\setminus \{u\}}d(i)$. To justify this, we remind that in every such a cycle, all voters should be represented, which is that exactly one of them should cast a ballot in every cost-minimizing delegation function. The total cost incurred in the solution by that cycle is exactly $v(u)+\sum_{i\in \mathcal{C}\setminus \{u\}}d(i)$, if the vertex $u$ is chosen to cast a ballot.

\begin{claim}
\label{casting_for_all}
Given a delegation graph $G$, a vertex $x$ corresponds to a casting voter in every cost-minimizing, feasible under the reachability constraint delegation function on $G$ if at least one of the following holds: (a) $x$ is a root of an upwards-directed tree, (b) $x$ is the forced voter of a cycle of a (weakly) connected component of $G$, (c) $vc(x)<dc(x)$.
\end{claim}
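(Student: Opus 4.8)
The plan is to establish the three sufficient conditions separately, since each one forces $x\in C$ for a different structural reason and only (b) requires real work. For (c), I would reuse the argument behind the Remark in \cref{sec:prelims}: if $v(x)<d(x)$ and $x$ delegates in some feasible solution $(C,D)$, then moving $x$ into $C$ and deleting its single outgoing edge keeps the solution feasible (every predecessor that reached a casting voter through $x$ now reaches $x$ itself, and nothing else changes) while strictly lowering the cost by $d(x)-v(x)>0$; hence $x\in C$ in every cost-minimizing solution. Condition (a) is even shorter: if the weakly connected component of $x$ is an upwards-directed tree with $x$ as its root, then $x$ has out-degree $0$ in $G$, and since a delegation function must select one outgoing edge for every vertex outside $C$, the vertex $x$ must lie in $C$ in \emph{every} delegation function, cost-minimizing or not.

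The heart of the proof is condition (b). Let $\mathcal{C}$ be the cycle and $K$ its weakly connected component; since $\Delta\le 1$, every vertex of $\mathcal{C}$ has out-degree exactly $1$, pointing to the next cycle vertex, and deleting the cycle edges decomposes $K$ into upwards-directed trees rooted at the vertices of $\mathcal{C}$. The first step is to note that in every feasible solution at least one vertex of $\mathcal{C}$ is casting — otherwise every cycle vertex delegates along its unique (hence forced) outgoing edge, producing a closed delegation loop with no path to $C$; this is the sink-SCC observation already used in the proof of \cref{del_rep_positive}. The second step rewrites the forced-voter quantity as $v(u)+\sum_{i\in\mathcal{C}\setminus\{u\}}d(i)=\sum_{i\in\mathcal{C}}d(i)+\bigl(v(u)-d(u)\bigr)$, so that ``$x$ uniquely minimizes it over $\mathcal{C}$'' is equivalent to ``$x$ uniquely minimizes $v(u)-d(u)$ over $\mathcal{C}$''; combined with the hypothesis $v(u)\ge d(u)$ on $\mathcal{C}$ that is built into the definition of a forced voter, this upgrades to $v(i)-d(i)>v(x)-d(x)\ge 0$, i.e.\ $v(i)>d(i)$ \emph{strictly}, for every $i\in\mathcal{C}\setminus\{x\}$.

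The third step is an exchange argument. Assuming for contradiction that some cost-minimizing feasible solution $(C,D)$ has $x\notin C$, I set $S=\mathcal{C}\cap C$ (so $S\neq\emptyset$ and $x\notin S$) and build $(C',D')$ that agrees with $(C,D)$ on every vertex outside $\mathcal{C}$ — consistent, since each non-cycle vertex of $K$ has a unique outgoing edge and so its delegation was forced anyway — makes $x$ the only casting vertex of $\mathcal{C}$, and makes every other cycle vertex delegate along its unique edge, hence around the cycle toward $x$. I would then verify that $(C',D')$ still satisfies reachability (cycle vertices reach $x$; a delegating tree vertex follows its forced path up its tree and, on reaching $\mathcal{C}$, continues around it to $x$; casting tree vertices are untouched) and that the cost changes exactly by $\bigl(v(x)-d(x)\bigr)-\sum_{c\in S}\bigl(v(c)-d(c)\bigr)$, which is strictly negative because $S\neq\emptyset$, every summand $v(c)-d(c)$ strictly exceeds $v(x)-d(x)\ge 0$, and all summands are nonnegative. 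This contradicts cost-minimality, so $x\in C$ in every cost-minimizing feasible solution.

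The hard part is this third step: obtaining a \emph{strict} cost improvement is precisely what forces the use of the \emph{uniqueness} of the minimizer — with only $v\ge d$ on $\mathcal{C}$ the exchange would be cost-neutral and would contradict nothing — and one must be careful that the re-routing around the cycle does not break reachability for the trees dangling off $\mathcal{C}$. Cases (a) and (c) are essentially free. Should the converse also be wanted where this claim is invoked (in \cref{control:pos2}), it follows symmetrically: an $x$ satisfying none of (a)--(c) is not a tree root, is not a forced voter of any cycle, and has $v(x)\ge d(x)$, so it can be re-classified as delegating at no extra cost in some cost-minimizing solution, hence is not casting in all of them.
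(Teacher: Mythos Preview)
Your proof is correct and, in fact, more detailed than the paper's for the direction actually stated. The paper dismisses the sufficiency of (a)--(c) in one sentence (``it is easy to verify'') and instead devotes its proof to the \emph{converse}, which it somewhat confusingly calls the ``forward direction'': assuming $x$ is casting in every cost-minimizing solution while none of (a)--(c) hold, and deriving a contradiction by arguing that $x$ can then safely delegate. Your exchange argument for case (b) --- rewriting the forced-voter quantity as $\sum_{i\in\mathcal{C}} d(i)+(v(u)-d(u))$ to extract strict inequality $v(i)-d(i)>v(x)-d(x)\ge 0$ for every other cycle vertex, then swapping $S=\mathcal{C}\cap C$ out and $x$ in --- is exactly the natural way to make the paper's ``easy to verify'' precise, and your care about re-routing the trees hanging off the cycle is warranted. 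Your final paragraph sketching the converse matches the paper's actual argument, so between the two you cover the full characterization that \cref{control:pos2} needs.
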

\begin{proof}
    If any of the conditions (a), (b) or (c) holds, it is easy to verify that any cost-minimizing feasible delegation function would ask $x$ to cast a ballot. To prove the forward direction, suppose that $x$ will cast a ballot, while all (a), (b) and (c) do not hold. If $x$ is not a root then either it belongs to a cycle or there is a path from $x$ to some other vertex $x'$ from which there is no path back to $x$, in which case $x'$ is already represented in the solution (by either herself or another vertex) resulting to a representation of $x$ as well, so there is no need of picking $x$ as a casting voter as well, provided that $vc(x)\geq dc(x)$. So, say that $x$ belongs to a cycle, without being the forced voter. But picking the forced voter, and only this, from the cycle, is sufficient to cover all of those. Hence, we have a contradiction.
\end{proof}

Then, using the conditions from \cref{casting_for_all} we can in polynomial time determine whether $x$ is a casting voter in all delegation functions that achieve minimization of cost or not, in which case we can make her one by deleting her outgoing edge.

So, now we can suppose that $x$ is a casting voter, therefore we consider $x$ as being a root of an upwards-directed tree. We now turn to computing her voting power $p(x)$. The vertices in the same (weakly) connected component as $x$, say $T_x$ will delegate to $x$ in at least one delegation function that is cost-minimizing. Specifically, those of delegating cost lower than voting cost will delegate to $x$ under every cost-minimizing delegation function and voters for which those costs are equal will delegate in some functions. No voter in this component has strictly less voting cost than delegating cost as those voters would vote themselves. Consider a virtual deletion of all those vertices in $T_v$ that will cast a ballot in at least one cost-minimizing delegation function. The vertices that remain having a path to $x,$ will delegate to her under every such function, so their cardinality (plus one) will be the minimum voting power that $x$ will have in any cost minimizing delegation function, in a tight sense, meaning that there is at least one such delegation function under which the voting power of $x$ exactly equals the computed number. We call those vertices $D(x)$.

Then, we determine the set $S$ of vertices for each of which there are at least $p(x)-1$ vertices having a directed path to them. All vertices of $S$ should either be deleted or, at least, their voting power should be reduced, which can be done by deleting some of their predecessors. Pick a voter $y\in S$. If $y$ has no predecessors in $S$ we delete $y$ itself. If $y$ has predecessors in $S,$ it is safe to start by deleting one which has the path of maximal length to $y$, say $z$. This holds because the voting power of $z$ should be reduced and this can be achieved by deleting $z$ or one of her predecessors $P(z)$, but since no of her predecessors are in $S$, it is always better to delete $z$ as it is the one in $P(z)\cup\{z\}$ that is closer to $y$, meaning that it might reduce the voting power of $y$ or of other vertices in $S$ between $z$ and $y$. After deleting $z,$ we recompute the set $S$ and repeat the described procedure until we have deleted $k$ vertices or until the newly computed set $S$ is empty.

\end{document}